\newif\ifarxiv
\arxivtrue  
\ifarxiv
  \documentclass[sigconf, nonacm]{acmart}
\else
  \documentclass[sigconf, anonymous, review]{acmart}
\fi
\AtBeginDocument{}
\usepackage{tikz}
\usetikzlibrary{arrows.meta,positioning}
\usepackage{microtype}
\usepackage{enumitem}
\usepackage{capt-of}
\newtheorem{prop}{Proposition}
\newtheorem{remark}{Remark}
\newtheorem{example}{Example}
\setcopyright{none}
\renewcommand\footnotetextcopyrightpermission[1]{}
\acmConference[Preprint]{Manuscript under review --- not yet peer reviewed}{2027}{}  
\acmYear{2027}

\newif\ifanon
\ifarxiv\anonfalse\else\anontrue\fi

\newcommand{\BCommonSupportPct}{97.9\%}
\newcommand{\BEffectiveSupportPct}{11.8\%}
\newcommand{\BMaxWeightPct}{0.06\%}
\newcommand{\BSingleUserInfluencePct}{8.55\%}
\newcommand{\BGridLo}{0.55}
\newcommand{\BGridHi}{0.82}
\newcommand{\LClock}{1.10}
\newcommand{\LActive}{5.57}
\newcommand{\LActiveLo}{4.81}
\newcommand{\LActiveHi}{6.42}
\newcommand{\LMatchedLo}{6.11}
\newcommand{\LMatchedHi}{6.44}
\newcommand{\LLandmark}{2.75}
\newcommand{\LPooledReal}{3.74}
\newcommand{\BDesignReproduced}{0.73}
\newcommand{\BDesignLo}{0.54}
\newcommand{\BDesignHi}{0.92}
\newcommand{\BDesignDraws}{40}
\newcommand{\BDesignShare}{0.5\%}
\newcommand{\BDesignFMISingle}{16\%}
\newcommand{\BDesignSpreadLo}{0.66}
\newcommand{\BDesignSpreadHi}{0.84}
\newcommand{\BDRRreal}{4.32}
\newcommand{\BDRRrealLo}{3.34}
\newcommand{\BDRRrealHi}{5.29}
\newcommand{\BDRRpseudo}{3.42}
\newcommand{\BDRRpseudoLo}{3.01}
\newcommand{\BDRRpseudoHi}{3.84}
\newcommand{\IGap}{0.89}
\newcommand{\IResid}{-0.014}
\newcommand{\ISimDelta}{8.1}
\newcommand{\ISimM}{0.29}
\newcommand{\ISimBreak}{0.11}
\newcommand{\IBreakThree}{0.30}
\newcommand{\IBreakFour}{0.22}
\newcommand{\IBreakSix}{0.15}
\newcommand{\NDLatentLo}{1.07}
\newcommand{\NDLatentHi}{1.08}
\newcommand{\NDLandmarkQuiet}{17}
\newcommand{\NDLandmarkBurst}{42}
\newcommand{\NDLatentShare}{52\%}
\newcommand{\NDLandmarkShare}{26\%}
\newcommand{\NCFlat}{-0.004}
\newcommand{\NCFull}{+3.31}
\newcommand{\PropCross}{+0.004}
\newcommand{\BonfN}{26}
\newcommand{\BonfMarginal}{16}
\newcommand{\BonfSimultaneous}{14}
\newcommand{\BonfSearchClear}{5}
\newcommand{\BonfSearchTotal}{5}
\newcommand{\BonfTested}{24}
\newcommand{\BonfDegenerate}{2}
\newcommand{\GradActive}{0.56}
\newcommand{\GradActiveLo}{0.436}
\newcommand{\GradActiveHi}{0.724}

\newcommand{\GradMild}{0.43}
\newcommand{\GradMildLo}{0.316}
\newcommand{\GradMildHi}{0.603}

\newcommand{\GradQuiet}{-0.04}
\newcommand{\GradQuietLo}{-0.083}
\newcommand{\GradQuietHi}{-0.004}
\newcommand{\GradQuietRRreal}{3.8}
\newcommand{\GradQuietRRpseudo}{0.9}
\newcommand{\GradSpan}{0.60}

\newcommand{\IndN}{29}
\newcommand{\IndUserTimed}{4}
\newcommand{\IndSourceCmp}{9}
\newcommand{\IndNoDesign}{8}
\newcommand{\IndExtTimed}{6}
\newcommand{\IndUTWithDiag}{3}
\newcommand{\LitScreened}{154}
\newcommand{\LitInFrame}{32}

\newcommand{\LitUnclear}{19\%}
\newcommand{\LitUserTimedN}{2}
\newcommand{\LitExtTimedN}{24}
\newcommand{\CovFixed}{0.950}
\newcommand{\CovFixedSE}{0.022}
\newcommand{\CovFixedWidth}{0.205}
\newcommand{\CovDesign}{0.990}
\newcommand{\CovDesignSE}{0.010}
\newcommand{\CovDesignWidth}{0.224}
\newcommand{\CovReps}{100}
\newcommand{\CovUsers}{200}
\newcommand{\CovBoot}{150}
\newcommand{\CovLimit}{0.21}
\newcommand{\CovGapToOne}{0.78}
\newcommand{\BVBootSE}{0.086}
\newcommand{\BVDeltaSE}{0.087}
\newcommand{\BVJackSE}{0.092}
\newcommand{\BVSERatio}{1.07}
\newcommand{\BVTrimK}{10}
\newcommand{\BVTrimEst}{0.89}
\newcommand{\BVTrimShift}{0.143}
\newcommand{\BVTrimNull}{0.063}

\newcommand{\ShareROM}{1.17}
\newcommand{\ShareROMLo}{1.11}
\newcommand{\ShareROMHi}{1.24}
\newcommand{\ShareMixPct}{17\%}
\newcommand{\ShareMaskPct}{82\%}
\newcommand{\ShareMOR}{1.07}
\newcommand{\ZeroWinAI}{41\%}
\newcommand{\ZeroWinPlacebo}{74\%}

\begin{document}

\title{Event-Time Confounding Under Bursty Human Dynamics}
\subtitle{When Event Windows in Behavioral Logs Mistake Task Episodes for Treatment Effects}

\ifanon
\author{Anonymous Author(s)}
\affiliation{\institution{Anonymous}\city{}\country{}}
\renewcommand{\shortauthors}{Anonymous Author(s)}
\else
\author{Michael Iannelli}
\affiliation{\institution{Scrunch AI}\city{New York}\country{USA}}
\email{michael@scrunchai.com}

\author{Alan Ai}
\affiliation{\institution{Scrunch AI}\city{New York}\country{USA}}
\email{alan.ai@scrunchai.com}
\renewcommand{\shortauthors}{Iannelli and Ai}
\fi

\begin{abstract}
Studies of digital behavior often align users at moments they choose---opening an AI assistant,
clicking a recommendation, or visiting a product page---and interpret higher activity afterward as
an event effect. We show how this creates an \emph{endogenous time zero}: the event occurs during an
ongoing task episode, so the aligned curve can trace episode continuation rather than a response to
the event. In same-user, cross-surface web logs, AI, shopping, news, coding, and reference events are
all preceded by broad activity increases that peak before time zero. Our strongest test uses
known-null timestamps that cause nothing. Among the $5.8\%$ of AI responses meeting strict
pre-event activity and washout criteria, these timestamps show $\BDRRpseudo\times$ the post-event
search activity of a within-user placebo, compared with $\BDRRreal\times$ for real events. The
fraction of excess reproduced by the known null falls from $\GradActive$ at detectably active
moments to $\GradQuiet$ at quiet moments, where the design detects none. We formalize this
\emph{episode-selection bias}, prove that a single-surface event window cannot separate it from a
genuine effect without additional assumptions, and show in zero-effect simulations why user fixed
effects and coarse activity matching can fail: the confound is within-user and time-varying. We
provide a diagnostic protocol, public-data benchmarks, and \texttt{burstcheck}, a lightweight audit
tool. User-timed events may have real effects, but post-event volume does not identify them by
default; studies should compare similar episodes with and without the event.
\end{abstract}

\begin{CCSXML}
<ccs2012>
<concept>
<concept_id>10002951.10003260.10003277.10003280</concept_id>
<concept_desc>Information systems~Web log analysis</concept_desc>
<concept_significance>500</concept_significance>
</concept>
<concept>
<concept_id>10010147.10010178.10010187.10010192</concept_id>
<concept_desc>Computing methodologies~Causal reasoning and diagnostics</concept_desc>
<concept_significance>500</concept_significance>
</concept>
<concept>
<concept_id>10002950.10003648.10003649.10003655</concept_id>
<concept_desc>Mathematics of computing~Causal networks</concept_desc>
<concept_significance>300</concept_significance>
</concept>
</ccs2012>
\end{CCSXML}
\ccsdesc[500]{Information systems~Web log analysis}
\ccsdesc[500]{Computing methodologies~Causal reasoning and diagnostics}
\ccsdesc[300]{Mathematics of computing~Causal networks}

\keywords{causal inference, event-study designs, endogenous treatment timing, endogenous time
zero, time-varying confounding, negative controls, web log analysis, bursty human dynamics,
episode-selection bias}

\maketitle

\section{Introduction}
A person has a question about a camera.
Over twenty minutes they open ChatGPT, run three Google
searches, read two review sites, and land on a product page.
An event-window study anchors on the AI
prompt in that span and asks what happens in the minutes after: searches are up, browsing is up, a
retailer gets a visit.
The natural reading is that the assistant drove the activity.
A competing
reading, and one these data cannot rule out, is that the prompt is not the start of anything: it
is a timestamp taken partway through a task episode that was already producing searches and
visits.
Whether they would have continued undiminished without the assistant is not observed; what
we show is that episode continuation alone can reproduce most of the observed magnitude among
detectably active events.

Figure~\ref{fig:eventtime} shows this directly in same-user web logs.
Around conversational-AI
responses, browsing and search do not jump at the event; they climb to a peak roughly eight minutes
\emph{before} it and decline afterward, while a placebo anchored on random moments stays flat. The
event sits at the tail of a rising curve it did not cause.
A causal reading of a post-event window
requires the event to supply a defensible time zero, and a user-timed event can fail to do so when
users act while task intensity is already high.
We call the resulting
problem \emph{endogenous time zero}: the moment a study aligns on is chosen by the same behavioral
process that generates the outcome.

\begin{figure}[t]
\centering
\includegraphics[width=\columnwidth]{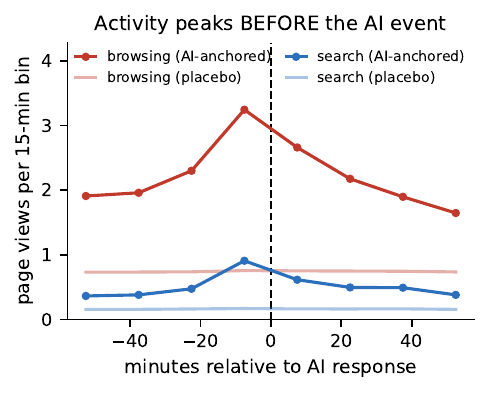}
\caption{The event is not the beginning. Activity around AI responses (15-minute bins) against a
within-user placebo anchored on random moments from the same user's span: browsing and search peak
\emph{before} the event ($3.2\times$ and $3.3\times$ over placebo in the pre-window) and decline
smoothly through it.
The focal response cannot by itself explain a trajectory that began earlier.}
\Description{A line plot of page views per 15-minute bin from minus 60 to plus 60 minutes around
the AI response.
AI-anchored browsing and search rise to a peak just before time zero and fall
after; placebo lines are flat.}
\label{fig:eventtime}
\end{figure}

This is not specific to AI.
Ads, recommendation clicks, product-page visits, brokerage-app opens,
symptom-checker queries, and first news-site visits can share the structure: the focal event may
occur when latent task intensity is already high, and the same latent state may drive the outcome.
The
mechanism we study, \emph{episode-selection bias}, is the behavioral form of this problem: focal
events are selected into latent task episodes, and the episode's continuation is then read as the
event's consequence.
Human activity is bursty, arriving in tight episodes separated by long quiet
gaps~\cite{barabasi2005origin,karsai2018bursty}, which makes this selection risk plausible and
potentially large;
but burstiness itself is not the confounder.
The confounder is the latent state (intent, task
engagement, attention) whose observable signature is the burst.

Three things make this more than a restatement of known activity bias~\cite{lewis2011here}.
First,
the confound is \emph{within-user} and time-varying. Activity bias says exposed \emph{people} are
busier; episode-selection bias says the same person's \emph{moments} are, which is why the field's
default defense, the within-user design, does not remove it: fixed effects, within-person windows,
and matching on a user's own recent history all compare the same person's busy moments to their
quiet ones.
Second, in cross-surface same-user logs the latent state is
visible: a focal event on one surface is preceded by broad elevation across multiple surfaces, most
consistently browsing and search, including outcomes the event cannot plausibly cause.
Third, and most directly, the bias can be exhibited on
real data with a known answer: for \emph{pseudo-events}, timestamps with a true effect of zero
placed at moments matched to real events on strictly pre-event activity, a coherent equal-user
matched-set estimand reports an excess association about three-quarters as large as the real one
(\S\ref{sec:null}).

\paragraph{Contributions.}
(1) We name and formalize \emph{endogenous time zero} and its behavioral mechanism,
\emph{episode-selection bias}, as a within-user form of time-varying confounding that survives
user fixed effects, and prove that no functional of single-surface event-aligned data identifies
the event's effect without further assumptions (\S\ref{sec:theory},
Proposition~\ref{prop:nonid}).
(2) We demonstrate the bias on real data with a known answer: the known-null pseudo-event
experiment, in which a real event's apparent lift also varies systematically with its position
inside the episode (\S\ref{sec:null}).
(3) We show the selection mechanism is general --- focal events across five domains sit inside
broad multi-surface episodes that rise before the event --- and explain in a zero-effect
simulation why the common adjustments fail together while latent-state methods recover only part
(\S\ref{sec:emp}, \S\ref{sec:sim}).
(4) We organize the responses, falsification tests, alternative estimands, bias-reduction
methods, and what identification would additionally require, into a diagnostic protocol with
public plasmode benchmarks (real data with known injected effects) and \texttt{burstcheck}, a
dependency-light audit tool whose screening
statistic is validated at scale on an external event set
(\S\ref{sec:remedies}, Appendix~\ref{app:discriminant}).

\paragraph{Scope of the claim.} We do not claim that all event-window estimates are invalid, nor
that adjustment is hopeless.
The claim is that when focal events are user-timed and fall inside
task episodes, the event-aligned contrast can be dominated by episode shape unless the design
separates shape from treatment response, and the burden of proof belongs on the study, not the
reader.
We hold ourselves to that burden: Appendix~\ref{app:casestudy} runs the audit on a stylized finding
of exactly the kind our own industry publishes, and every line of the audit moves.
The critique applies most directly to
count-based volume lifts; compositional shifts, first-observed events, and routing handoffs have
different exposure to it (\S\ref{sec:remedies}), though none is automatically safe.

\begin{figure}[t]
\centering
\begin{tikzpicture}[font=\footnotesize, >=Stealth, node distance=6mm and 11mm,
  box/.style={draw, rounded corners, align=center, inner sep=3pt, minimum height=6mm}]
\node[box] (B) {Latent episode\\ state $B_{it}$};
\node[box, below left=6mm and 3mm of B] (A) {Focal event\\ $A_{it}$};
\node[box, below right=6mm and 3mm of B] (Y) {Outcome\\ $Y_{i,t+h}$};
\draw[->] (B) -- (A);
\draw[->] (B) -- (Y);
\draw[->, dashed] (A) -- node[below, font=\scriptsize] {effect of interest} (Y);
\end{tikzpicture}
\caption{Endogenous time zero. A latent, within-user, time-varying episode state $B_{it}$ raises
both the probability of the focal event and the outcome.
The event-window contrast mixes the
dashed effect of interest with a backdoor path through $B_{it}$.}
\Description{A causal diagram. A latent episode-state node points to both a focal-event node and
an outcome node; a dashed arrow from the focal event to the outcome marks the effect of interest.}
\label{fig:dag}
\end{figure}

\section{Related Work}\label{sec:rw}
\paragraph{Activity bias and observational attribution.}
The closest precedent is \emph{activity bias}~\cite{lewis2011here}: display-ad exposure coincides
with bouts of online activity, so observational and even matched estimates over-credit the ad; a
placebo ad produced the same spurious lift.
Large-scale benchmarks against experiments confirm that
observational ad methods often fail to recover the causal effect~\cite{gordon2019comparison}, that
paid-search returns are far below their observational estimates because clicks track pre-existing
intent~\cite{blake2015}, and that even randomized measurement is punishingly
underpowered~\cite{lewis2015unfavorable}.
We take this literature as the empirical anchor and ask
what it implies for the general event-window design: we locate the generative source (selection of
user-timed events into latent episodes), show the resulting confound operates \emph{within} person,
across one user's own busy and quiet moments, and so survives
fixed effects, and give it a known-answer demonstration on real behavioral trajectories.

\paragraph{Endogenous treatment timing and time-varying confounding.}
That treatment timing can be selected on an evolving outcome process is an old observation with
several names --- Ashenfelter's dip in job-training
evaluation~\cite{ashenfelter1978,heckman1999preprogramme}, protopathic bias in
pharmacoepidemiology~\cite{horwitz1980protopathic}, the misaligned time zero that motivates
target-trial emulation~\cite{hernan2016target} --- and econometrics and biostatistics both supply
designs that buy identification with named assumptions, from timing-of-events models to
g-methods for time-varying confounding~\cite{abbring2003timing,robins2000marginal}.
Appendix~\ref{app:extendedrw} positions the paper against these literatures, self-controlled
designs, bursty-dynamics models, and modern event-study corrections in full.
The short version: high-frequency behavioral logs put studies in the endogenous-timing regime
\emph{by default} --- every user-timed anchor is a potential Ashenfelter's dip at the minute
scale, recurring many times per user --- so the assumptions those designs supply are ones a log
study has to earn, and our claim to novelty rests on the demonstration that the confound is
within-user and time-varying at this timescale, untouched by the defenses applied users actually
deploy: the confounder is latent episode intensity, typically unmeasured in clickstream, and it
breaks the field's default within-user designs while remaining diagnosable from inter-event
structure.

\paragraph{Negative controls and proxies for unmeasured confounding.}
Negative-control outcomes expose a shared confounding process: an outcome the treatment cannot
plausibly affect should not move~\cite{lipsitch2010negative}.
A moving negative control falsifies a
design; a still one does not validate it.
When the confounder is unmeasured but proxied, proximal
causal inference gives conditions, treatment-side and outcome-side proxies with completeness
requirements, under which imperfect proxies identify effects~\cite{miao2018proxy,tchetgen2024proximal}.
We use this frame deliberately: cross-surface activity streams are proxies for the latent episode
state, and conditioning on them reduces confounding without, on its own, meeting the identification
conditions (\S\ref{sec:emp}).
The panel event-study literature has gone further than we do, and we say so plainly: given a proxy
that responds to the confound but not to the treatment, a covariate-instrument estimator
identifies the effect in the presence of pre-event trends~\cite{freyaldenhoven2019pretrends}.
That
is a formal version of the leave-one-surface-out index of \S\ref{sec:emp}, and it confines
Proposition~\ref{prop:nonid} to the observed law. What we do not
establish is their key condition --- that a cross-surface index responds to episode state while
remaining unaffected by the focal event; excluding the event's own surface makes that plausible,
co-activity keeps it an assumption. Testing it is
the most direct route from these diagnostics to an estimator, and we leave it open.

\paragraph{Bursty human dynamics.}
Human activity timing is heavy-tailed, far from
Poisson~\cite{barabasi2005origin,malmgren2008poissonian,karsai2018bursty}, and our argument needs
only the weak version: digital activity is episodic and temporally clustered, whatever the
mechanism, and the latent state behind the clustering can jointly determine event timing and
outcomes.
That bridge, from descriptive burstiness to causal bias, is the piece this literature has not
supplied (Appendix~\ref{app:extendedrw}).

\section{Known-Null Timestamps on Real Behavioral Trajectories}\label{sec:null}
The strongest form of the argument does not compare estimators; it manufactures a case where the
right answer is known to be zero on the \emph{real} data and asks what the standard pipeline
reports.
The design in one breath.
\emph{Real event}: an AI response at a detectably active moment.
\emph{Fake event}: an equally active moment from the same user at which no AI response occurred.
\emph{Fact}: the fake event causes nothing, by construction.
\emph{Result}: it nevertheless reproduces most of the real events' apparent post-event lift.
The rest of this section is the care that makes that comparison fair.
Each such \emph{pseudo-event} is an engineered negative-control
\emph{exposure}~\cite{lipsitch2010negative}: a timestamp that shares the timing context of the
focal event while being guaranteed, by construction, to affect nothing.
The data are an opt-in cross-surface research panel that records the same users' web browsing,
searches, and conversational-AI events over one measurement window. Most analyses cap events per
user; the headline matched set deliberately does not, because users enter its two means equally
(Appendix~\ref{app:bootval} states the resulting estimand exactly and why its trimming sensitivity
is arithmetic rather than anomalous). \S\ref{sec:emp} details the instrumentation, the disclosure constraints (rates, ratios, and
user-clustered intervals only; no counts), and the influence statistics.

\paragraph{A strictly pre-event construction.}
Because a real treatment could alter the activity that follows it, nothing in the design may
consult the future: not eligibility, not matching, not position.
The primary construction
therefore uses only strictly pre-$t$ information.
\begin{enumerate}[leftmargin=1.6em,itemsep=1pt,topsep=2pt,label=(\roman*)]
\item A moment $t$ is an \emph{active landmark} if the user has at least $5$ non-search, non-AI
page views in $[t{-}30\text{min}, t)$ --- a landmark in the survival-analysis sense, a moment
classified using only information from before it.
\item \emph{Real} moments are AI responses that are landmarks ($31\%$ of the per-user-capped
response set used throughout).
\item \emph{Pseudo} moments are uniform-random clock times that pass the same filter and have no
AI response in the prior hour; their causal effect is exactly zero. We draw them from the same
users, under the same per-user cap applied to the pooled constructions.
\item In the secondary \emph{pooled} construction, pseudo moments are reweighted toward the real
moments' joint pre-$t$ distribution of activity count, slope, and clock, and the outcome is the
following hour's search rate against each user's stabilized random-moment placebo (a fixed
number of random placebo anchors per user).
Appendix~\ref{app:bootval} gives its weighting, censoring, and treatment-history caveats, and
the reason it stays secondary: matching on a transient pre-treatment outcome can be \emph{worse}
than differencing~\cite{chabeferret2015}, exactly the regime a task episode puts a user in.
\end{enumerate}
Our headline instead uses an \emph{equal-user matched set}.
We first restrict real responses to those after a $60$-minute prior-AI washout, meaning that no
other AI response occurred in the preceding hour. Within each user, we then match those responses
to the user's own pseudo moments. Both groups receive a fixed, uncensored $30$-minute follow-up
and use one shared per-user baseline.
The resulting real and pseudo lifts are
$\BDRRreal\times$\,[$\BDRRrealLo,\BDRRrealHi$] and
$\BDRRpseudo\times$\,[$\BDRRpseudoLo,\BDRRpseudoHi$]. The known-null excess is
$\BDesignReproduced$ [$\BDesignLo$, $\BDesignHi$] of the real excess. This fraction is a ratio of
user-equal-weighted means, so users with a larger real excess contribute more to the ratio
(Appendix~\ref{app:bootval} states the estimand exactly). Each bootstrap replicate rebuilds the
bin edges, matched sets, and baselines (Figure~\ref{fig:landmark}).
The headline averages over \BDesignDraws\ analyst-drawn pseudo pools: design-conditional
estimates run $\BDesignSpreadLo$--$\BDesignSpreadHi$, no draw reproduced the whole excess, and
the reported interval combines within-draw and between-draw uncertainty as
$V=\bar W+B/M$. The common-support and bootstrap-validity arguments
\citep{abadieimbens2008}, and the envelope mistake an earlier version of this analysis made are
in Appendix~\ref{app:bootval}.
Influence is not flat: dropping the single
most influential user moves the estimate by \BSingleUserInfluencePct\ of its value, and dropping
the ten most influential moves it to $\BVTrimEst$, above the design-conditional range
(Appendix~\ref{app:bootval} explains why that cuts against comfort rather than for it).
Across the pre-specified sensitivity grid --- varying the prior-AI
washout, landmark threshold, pre-window, matching granularity, follow-up window, and the
activity-share outcome each in turn --- the share stays between $\BGridLo$ and $\BGridHi$. We
report the range and decline to call the share robust: it is wide, it is consistent with the
design-conditional spread above, and our pre-specified reporting rule is to lead with the RR
pair, not the fraction.
This demonstrates that episode timing
alone is \emph{capable} of generating most of the observed magnitude; it is not a decomposition of
the real association into confounding and effect, because real and pseudo moments can still differ
on unmeasured dimensions (semantic intent, task type, the propensity to keep browsing).

\begin{figure}[t]
\centering
\includegraphics[width=\columnwidth]{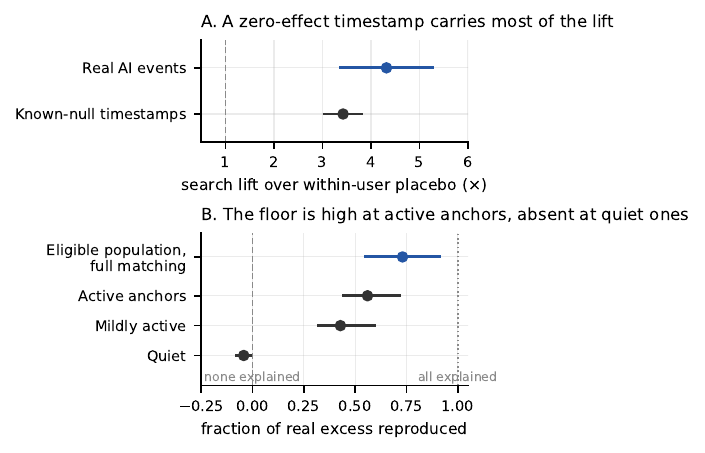}
\caption{The known-null landmark result. \emph{A:} design-averaged post-event search lift for
real AI events and for known-null timestamps matched on strictly pre-event activity, each against
a within-user placebo ($95\%$ user-clustered, design-combined intervals). \emph{B:} the fraction
of the real excess the known null reproduces: the eligible population at full matching (blue)
above the three anchor classes at fixed clock-only matching (grey). The constructions differ in
population and matching richness, so the classes are not a decomposition of the headline, and
every fraction is a floor (Appendix~\ref{app:bootval}).}
\Description{Two dot-and-interval panels. Panel A shows real AI events at about 4.3 times the
placebo and known-null timestamps at about 3.4 times, both far above one. Panel B shows the
reproduced fraction: about 0.73 for the eligible population at full matching, 0.56 at active
anchors, 0.43 at mildly active ones, and about zero at quiet ones, against reference lines at
zero and one.}
\label{fig:landmark}
\end{figure}

\paragraph{The false association survives every adjustment in the tested ladder.}
The ladder of controls walks the comparison from the naive pipeline's random-moment
baseline to the full pre-event match, and known-null timestamps produce large ``effects'' at every
intermediate rung.
Matching the clock (hour-of-day) barely moves the baseline ($\LClock\times$).
Anchoring the null moment at an arbitrary page view, the anchor used by a naive
visit-versus-ordinary-time event-window design, manufactures a
$\LActive\times$\,[$\LActiveLo,\LActiveHi$] association on its own,
an inspection-paradox
effect: event-anchored moments over-sample locally dense spells.
Adding pre-event state matching
on those anchors does not pull it back down; it holds or slightly raises it
($\LMatchedLo$--$\LMatchedHi\times$), since matching to real events' active
pre-windows selects still denser page-view moments.
Matching on a transient pre-period level is
known to induce bias through regression to the mean~\cite{daw2018matching}; here the mechanism is
the same, operating at the scale of minutes on the intensity that also selects the event.
Only when the null moment is drawn uniformly
in time, matched on strictly pre-event state, does the association fall to $\LLandmark\times$, and even
that final rung sits close to the real events' $\LPooledReal\times$ in this pooled construction.
A practitioner's takeaway: the
comparisons that feel increasingly careful do not approach the truth from above so much as
reshuffle which part of the episode they capture.

\paragraph{Where the risk lives.}
Splitting all in-panel AI responses by their strictly pre-event context: $31\%$ are
landmark-active (naive lift $3.63\times$ over the full class; the experiment's $\LPooledReal\times$ is
computed on its per-user-subsampled real pool), $22\%$ have some but sub-threshold prior activity
($3.07\times$), and $47\%$ are quiet on non-search surfaces in the prior half hour
($2.60\times$).
Running the known-null experiment separately in each class turns that split from a caveat into a
measurement. The reproduced fraction falls monotonically across them: $\GradActive$
[$\GradActiveLo$, $\GradActiveHi$] at landmark-active anchors, $\GradMild$ [$\GradMildLo$,
$\GradMildHi$] at sub-threshold ones, and $\GradQuiet$ [$\GradQuietLo$, $\GradQuietHi$] at quiet
ones, where the known-null control sits marginally \emph{below} its own baseline. All three run at
identical, deliberately impoverished matching --- the $6$-hour clock block alone --- because the
quiet class has no pre-activity variation left to match on; a matching-richness ladder on the
eligible population shows richness alone is worth about $0.16$ within a fixed population
(Appendix~\ref{app:bootval}), so richness is constant by construction across the classes and the
$\GradSpan$ span is the population, not the estimator.

So the measured floor on what episode selection accounts for falls from most of the association at
detectably active anchors, to under half at mildly active ones, to nothing detectable at quiet
ones. These are floors rather than decompositions: with a noisy observable proxy the reproduced
fraction can sit far below the truth (Appendix~\ref{app:bootval} measures a probability limit of
$\CovLimit$ at single-axis matching, rising only to $0.48$ at panel-like three-way matching, in a
synthetic world where the true fraction is $1$). What the gradient does not do is
explain the quiet class away: real lift there is still $\GradQuietRRreal\times$ against a null of
$\GradQuietRRpseudo\times$ (like the active class, each class experiment runs on its
washout-aligned, per-user-subsampled real pool, so its real-arm lift need not match the full-class
naive figure above; the quiet class's is also the widest-intervalled, $[2.7, 5.4]$). Two readings
survive and this design cannot separate them. Either a real
effect concentrates at quiet anchors, or the episode is real but invisible --- ``quiet'' means quiet
on the surfaces the detector sees, so search-led or assistant-contained episodes do not register.
The headline is not a population-weighted average of these three classes --- it is the eligible
active-anchor population at its own full matching, while the classes run at clock-only matching --- so
the two should not be reconciled arithmetically. A study
anchoring only on active moments should expect the high end.

\paragraph{Episode age.}
Real events' apparent lift declines with the episode's strictly past-defined elapsed age while
matched null moments show no comparable gradient (Appendix~\ref{app:bootval}); under a
constant-effect reading that is inconsistent with a stable treatment response, though
stage-dependent effect heterogeneity remains an alternative reading.

\paragraph{A completed-episode sensitivity agrees.}
An earlier construction that segments completed task episodes (post-event-defined, hence a
sensitivity rather than the primary design) reaches the same conclusion: pseudo-events inserted
into AI-free episodes reproduce $73\%$ of real in-episode events' excess, rising to an
upper-bound $108\%$ under whole-episode intensity matching (Appendix~\ref{app:completed}).

\section{Endogenous Time Zero}\label{sec:theory}
\paragraph{Setup.}
For user $i$ at time $t$, let $B_{it}$ be a latent task-episode state (we use a binary state for
exposition; the mechanism only requires that intensity varies within user), $A_{it}\in\{0,1\}$ a
user-timed focal event, and $Y_{i,t+h}$ an outcome measured over a window of length $h$ after $t$.
The structure is Figure~\ref{fig:dag}: $B_{it}\!\to\!A_{it}$, $B_{it}\!\to\!Y_{i,t+h}$, and the
effect of interest $A_{it}\!\to\!Y_{i,t+h}$.
Write $Y^{0}$ for the potential outcome with the event
absent.
A \emph{task episode} is a maximal span of elevated $B$; the \emph{event position} is where
inside that span the focal event falls.
Three layers should be kept distinct: \emph{endogenous time
zero} is the general problem (the alignment moment depends on a process that also predicts the
outcome); \emph{episode-selection bias} is the behavioral mechanism (user-timed events are more
likely during latent episodes that independently produce activity); \emph{burstiness} is the
observable property that makes the mechanism strong, not itself the confounder.

\paragraph{What intervention is being estimated?}
Focal events in logs recur, overlap, and come in versions, so ``the effect of the event'' is
underspecified until the intervention is: removing one AI response mid-conversation, denying the
assistant for the episode, and delaying access until it ends are different estimands with
different policy meanings.
We target the sharpest question the event-window design itself poses --- the effect of \emph{this}
focal event at \emph{this} moment on the following window, $do(A_{it}{=}0)$ against the observed
$A_{it}{=}1$ with treatment history up to $t$ as it was --- because that is the estimand the naive
contrast claims to measure, and the one our diagnostics audit.
Episode-level interventions are often the more meaningful policy object, and the right comparison
for them is between episodes (\S\ref{sec:remedies}); formal treatment of recurring events belongs
to the longitudinal machinery of \S\ref{sec:rw}.

\paragraph{Episode-selection bias.}
The naive event-window contrast decomposes into the average treatment effect on the treated (ATT)
plus a selection term:
\begin{align*}
\underbrace{E[Y_{i,t+h}\mid A_{it}{=}1]-E[Y_{i,t+h}\mid A_{it}{=}0]}_{\text{measured contrast}}& \\
= \underbrace{\text{ATT}}_{\text{causal}}
\;+\; \underbrace{E[Y^{0}_{i,t+h}\mid A_{it}{=}1]-E[Y^{0}_{i,t+h}\mid A_{it}{=}0]}_{\text{episode-selection bias}}.&
\end{align*}
The bias term is non-zero whenever event and non-event moments differ in the world where the event
has no effect; under Figure~\ref{fig:dag} it is positive whenever $A$ and $Y$ both increase in
$B$, and in the binary single-confounder structure it factors into a state effect times a
selection gap (Appendix~\ref{app:proofs}) --- the bias needs both a state that moves the outcome
and selection of events into the state, though a highly bursty process can carry little bias if
events land at random within bursts.

\paragraph{Why same-user designs do not solve it.}
User fixed effects remove $E[Y^{0}\mid i]$, the stable difference between heavy and light users.
They do not remove $E[Y^{0}\mid i, B_{it}{=}1]-E[Y^{0}\mid i]$, the within-user gap between a
person's busy and quiet moments.
If the focal event is selected on $B_{it}$, a within-user contrast
still compares the same person's episode moments to their quiet moments, and the bias persists.
The same logic can defeat matching on a user's recent observed activity when the state is latent
and fast-moving: recent history is an errored proxy, and conditioning on an errored proxy can
leave residual confounding (\S\ref{sec:sim} demonstrates the magnitude in one zero-effect
data-generating process, DGP).

\paragraph{The count/share paradox.}
During an episode many behaviors rise together, so a raw count outcome around a focal event largely
measures episode intensity.
Suppose the event \emph{locally substitutes} for a little of the
outcome (the assistant answered the question, so a few searches do not happen) while the episode
raises all activity.
Then the raw count rises relative to a typical window, yet the outcome's
\emph{share} of contemporaneous activity falls relative to a comparable episode window. Counts and
shares answer different questions, volume versus mix, and can move in opposite directions under one
process.
The simulation of \S\ref{sec:sim} makes it concrete: adding a small true local substitution to a
zero-effect burst world, the raw count rises after the event ($0.70\!\to\!2.52$ against a typical
window) because the burst dominates, while the activity share falls ($0.389\!\to\!0.304$ against
an activity-matched burst control) because the substitution surfaces once intensity is normalized.
Both readings are correct about different estimands, and a study reporting only one would tell
half the story.
Because the two shares $E[Y]/E[Y{+}O]$ and $E[Y/(Y{+}O)]$ differ in general, a study must
also say which it computes; we report the ratio of means throughout, in the simulation and on the
panel alike, and the two do \emph{not} agree here for reasons that are structural rather than
incidental (Appendix~\ref{app:weighting}).

Window length is part of the estimand for the same reason: because the episode has a duration,
minutes may capture local substitution, an hour the active episode, a day a share dip, a week no
change in totals, and contradictory estimates can coexist because they measure different objects
around an episode-selected event.

\paragraph{Propositions.}
Three propositions carry the argument; two further claims are standard mixture algebra, stated as
a remark and an example in Appendix~\ref{app:proofs} rather than dressed as theorems:
episode-selection bias can survive fixed effects (Remark~\ref{prop:fe}, the formal version of the
paragraph above), and count/share divergence arises from one process (Example~\ref{prop:cs}).
Derivations are in
Appendix~\ref{app:proofs}, and the simulation (\S\ref{sec:sim})
and panel (\S\ref{sec:null}, \S\ref{sec:emp}) check each empirically.
\begin{prop}[Activity-share validity]\label{prop:share}
The activity share is invariant to episode selection only under a proportional-burst condition,
$\lambda_Y(1)/\lambda_Y(0)=\lambda_O(1)/\lambda_O(0)$: the episode scales the outcome and the
comparison activity ($\lambda_O$) by the same factor.
When it holds, the share is a valid
\emph{alternative estimand} (a mix effect); it is not a debiased version of the count.
The equivalence is stated for a two-state $B$ and a single comparison stream. With multiple
concurrent streams or continuous intensity the condition generalizes only to state-invariance of the
whole composition vector, $\lambda(b)=\kappa(b)\lambda(0)$, which is markedly more restrictive
than the two-stream form suggests; the simulation in \S\ref{sec:sim} measures how the share
degrades as proportionality fails by degree rather than treating the condition as a knife-edge.
\end{prop}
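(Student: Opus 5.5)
The plan is to model outcome and comparison counts in the window after $t$ as Poisson (or any mean-rate) processes whose intensities depend only on the latent state: $E[Y^0\mid B{=}b]=h\lambda_Y(b)$ and $E[O\mid B{=}b]=h\lambda_O(b)$, with $b\in\{0,1\}$. Under Figure~\ref{fig:dag}, the focal event is selected on $B$, so the event and non-event groups differ only through their mixing weights $p_a=P(B{=}1\mid A{=}a)$. Following the paper's convention (\S\ref{sec:theory}), I define the share as the ratio of means, $S(a)=E[Y^0\mid A{=}a]/E[Y^0{+}O\mid A{=}a]$. Invariance to episode selection then means that $S(a)$ does not depend on $p_a$, that is, on how events are distributed across states.

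\textbf{Step 1: write the share as a function of the mixing weight.} We have
\[
S(p)=\frac{p\lambda_Y(1)+(1-p)\lambda_Y(0)}{p[\lambda_Y(1)+\lambda_O(1)]+(1-p)[\lambda_Y(0)+\lambda_O(0)]}.
\]
This is a ratio of two affine functions of $p$. It is constant in $p$ if and only if the numerator coefficients are proportional to the denominator coefficients, which is equivalent to $S(0)=S(1)$. Cross-multiplying $S(0)=S(1)$ gives $\lambda_Y(1)\lambda_O(0)=\lambda_Y(0)\lambda_O(1)$, which is exactly the proportional-burst condition. Showing the converse requires care: if the condition fails, then $S$ is strictly monotone in $p$. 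Consequently any selection gap $p_1\neq p_0$ produces a nonzero share bias, and the condition is necessary as well as sufficient.

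\textbf{Step 2: interpret the share as an alternative estimand rather than a debiased count.} When the condition holds, the share contrast equals the contrast in $Y$-share under the intervention. This is the mix effect: a purely local substitution moves it, as in Example~\ref{prop:cs}. The count contrast, by contrast, still carries the bias term $(p_1-p_0)h[\lambda_Y(1)-\lambda_Y(0)]$ from the decomposition in \S\ref{sec:theory}. I will exhibit this with an explicit case satisfying $\lambda_Y(1)=\kappa\lambda_Y(0)$ and $\lambda_O(1)=\kappa\lambda_O(0)$ with $\kappa>1$. In that case the share is unbiased while the count is biased, so the share recovers a different functional and not the count ATT.

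\textbf{Step 3: generalize to multiple streams or continuous intensity.} Let $B$ have an arbitrary distribution, and let the streams have rate vector $\lambda(b)$. The share of each component must be invariant to every mixing distribution over $b$. I apply the Step 1 argument pairwise to any two states $b,b'$, using the two-point mixtures. This yields $\lambda_k(b)/\lambda_k(b')=\lambda_j(b)/\lambda_j(b')$ for all streams $j,k$. Hence $\lambda(b)=\kappa(b)\lambda(0)$ for a scalar $\kappa(b)$. Sufficiency is immediate, because every component of a mixture of $\kappa(b)\lambda(0)$ is proportional to $\lambda(0)$.

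The main obstacle is definitional rather than algebraic. I must pin down what "invariant to episode selection" means, namely invariance over all selection mechanisms, i.e.\ all mixing weights, rather than over the single observed gap. With that fixed, necessity follows from the monotonicity argument; without it, a particular selection gap could accidentally leave the share unbiased. I will also note that with the mean-of-ratios share the equivalence is not exact, because $E[Y/(Y{+}O)]$ depends on the count distribution and not only on the rates. This is why the statement is restricted to the ratio of means.
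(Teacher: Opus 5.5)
Your Step 1 is the paper's proof made precise. The paper defines $s(b)=\lambda_Y(b)/(\lambda_Y(b)+\lambda_O(b))$, asserts in one line that the share contrast vanishes for every mixture over $b$ iff $s(1)=s(0)$, and cross-multiplies. Your linear-fractional form supplies what that line skips. The ratio-of-means share is an activity-weighted mixture of $s(0)$ and $s(1)$, not a $p$-weighted one, and $S'(p)$ has the sign of $\lambda_Y(1)\lambda_O(0)-\lambda_Y(0)\lambda_O(1)$.

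That monotonicity also dissolves your closing definitional worry for binary $B$. Any single gap $p_1\neq p_0$ already forces the condition. The all-mixtures reading only matters in Step 3, where different mixtures over three or more states can give equal shares without proportional rates.

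Steps 2 and 3 argue clauses the paper states but does not derive, so they go beyond its proof. In Step 3, write out the intermediate step: Step 1 for stream $k$ against the pooled remaining streams gives $\lambda_k(b)/\lambda_k(b')=\Lambda(b)/\Lambda(b')$, with $\Lambda$ the total rate, and that ratio is your $\kappa$. Also say plainly that the whole-vector conclusion comes from requiring \emph{every} component's share to be invariant. If only the outcome's share of total activity is at stake, the two-stream condition with $O$ pooled suffices, so the statement's ``only'' rests on your reading.

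Last, your mean-of-ratios aside is too strong under your own Poisson model. Given the state, $Y\mid Y{+}O=n$ is binomial with success probability $s(b)$. So the mean of ratios over nonempty windows equals $s(b)$ state by state and obeys the same equivalence under the null. The paper prefers the ratio of means because dropping empty windows is differential across arms and, off the null, conditions on a descendant of treatment (Appendix~\ref{app:weighting}). That is a different reason from a failure of the rate algebra.
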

\begin{prop}[Single-surface non-identification]\label{prop:nonid}
There exist two generative models over the observable single-surface data $(O_{it},A_{it},Y_{it})$
that induce the same joint distribution of the \emph{entire} event-aligned process, pre-event path
included: Model A, in which a persistent latent episode state drives activity, event timing is
selected on that state, and the event's effect is exactly zero; and Model B, in which treatment is
conditionally exchangeable given the observed history, there is no latent confounding beyond that
history, and the event has a strictly positive effect.
The identified set is stated exactly in Appendix~\ref{app:idset}; it is unbounded below, so the
observed law places essentially no restriction on the effect.
Hence no functional of single-surface
event-aligned data identifies the event's effect without assumptions beyond the observed law
(exogenous timing, a valid adjustment set, proxy conditions, or design-based variation).
\end{prop}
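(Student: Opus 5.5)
The proof is by explicit construction: build Model A first, then read Model B off the observed law that Model A induces. Work in discrete time with a finite horizon window around each event.

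\textbf{Model A.} Let $B_{it}$ be a two-state Markov chain that is persistent, with $P(B_{t+1}{=}b\mid B_t{=}b)$ close to one. Comparison activity $O_{it}$ and outcome $Y_{it}$ are conditionally Poisson given $B_{it}$, with rates $\lambda_O(B)$ and $\lambda_Y(B)$, both larger in state $1$. The focal event is drawn as $A_{it}\sim\mathrm{Bernoulli}(p(B_{it}))$ with $p(1)>p(0)$, and $A$ enters neither the transition of $B$ nor the rates. The effect is therefore exactly zero. Because $B$ is persistent and events select on it, the event-aligned process shows the rising pre-event and decaying post-event curve of Figure~\ref{fig:eventtime}.

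\textbf{Model B.} Let $P_A$ be the joint law Model A induces on the observable process $(O_{it},A_{it},Y_{it})_t$. Factor $P_A$ sequentially, with $H_t$ the observed history up to $t$:
\[
P_A = \prod_t P_A(O_t,Y_t\mid H_{t-1})\,P_A(A_t\mid H_{t-1},O_t,Y_t).
\]
Define Model B as a structural model with no latent variable.
\begin{itemize}
\item Treatment is assigned by the kernel $P_A(A_t\mid H_{t-1},O_t,Y_t)$, so it is exchangeable given observed history by construction.
\item In the factual world, where $A$ follows its observed path, outcomes follow the kernel $P_A(O_t,Y_t\mid H_{t-1})$ unchanged.
\item Potential outcomes are specified separately. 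Under $do(A_{t_0}{=}0)$, $Y^0_{t_0+k}$ is drawn from the kernel with the observed event replaced by a non-event and then shifted down by a fixed $\delta>0$.
\end{itemize}
Only the factual kernels generate data, so Model B reproduces $P_A$ exactly on every observable, including the entire pre-event path. Its ATT equals $\delta>0$. The natural-looking choice, taking Model B's counterfactual to be the observed regression of $Y$ on history with $A{=}0$ substituted, would give a different nonzero effect. Any value works, because the counterfactual kernel is unconstrained by data once no latent variable or exclusion assumption ties it down.

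For the identified-set claim, note that $\delta$ can be any real number: set $Y^0=Y-\delta$ with mass truncated appropriately, or use a multiplicative form for counts. This shows the set is the whole admissible range, unbounded below as Appendix~\ref{app:idset} states.

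Non-identification then follows. Any functional of the observed law takes the same value under Model A and Model B but must equal $0$ and $\delta$ respectively. The stated extra assumptions are exactly the ones that would rule out one of the two models:
\begin{itemize}
\item exogenous timing forbids $p(\cdot)$ depending on $B$;
\item a valid adjustment set, or proxy completeness, pins the counterfactual kernel to an observed one;
\item design-based variation supplies data from $do(\cdot)$.
\end{itemize}

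\textbf{Main obstacle.} The difficulty is making Model B legitimate rather than vacuous. Its counterfactual must be compatible with the sequential exchangeability it claims, which requires $Y^0\perp A_t\mid H_{t-1}$. Defining $Y^0$ as a deterministic shift of a history-only draw guarantees this. I would also have to check that the shift respects count support, for example by adding an event-triggered Poisson$(\delta)$ burst to factual outcomes instead of subtracting. The cleanest fix is to write Model B's factual outcome as $Y^0$ plus an independent Poisson$(\delta)$ increment after events. Then $Y^0$ must have the law $P_A$ minus that increment. This requires $P_A$'s post-event outcome to be decomposable, that is, Poisson-dominated, which holds under Model A whenever $\lambda_Y(1)-\lambda_Y(0)\ge\delta$. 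That inequality fixes how large $\delta$ may be taken within a single construction.
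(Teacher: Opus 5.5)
There is a genuine gap, and it is the step that makes $\delta$ free. Model B must satisfy consistency and conditional exchangeability given the observed history $H_t$ (everything observed before $A_t$). Consistency gives $Y^0_{t+h}=Y_{t+h}$ at untreated moments, so $\mathcal{L}(Y^0_{t+h}\mid H_t,A_t{=}0)$ is already fixed by $P_A$. Exchangeability, $Y^0_{t+h}\perp A_t\mid H_t$, then forces the treated-moment counterfactual to have that same law.

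The counterfactual kernel is therefore not ``unconstrained by data.'' It is exactly the ``natural-looking choice'' you set aside, and Model B's ATT is point-identified:
\[
\mathrm{ATT}_B=E\bigl[E(Y_{t+h}\mid H_t,A_t{=}1)-E(Y_{t+h}\mid H_t,A_t{=}0)\bigm| A_t{=}1\bigr].
\]
Shifting the treated-moment draw down by $\delta$ makes $E[Y^0_{t+h}\mid H_t,A_t{=}1]$ and $E[Y^0_{t+h}\mid H_t,A_t{=}0]$ differ by $\delta$. That is unmeasured confounding by definition, so your Model B is not the one in the statement. As written, its ATT would also be $\mathrm{ATT}_B+\delta$, not $\delta$. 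Assigning treatment by an observed-history kernel does not by itself give exchangeability; exchangeability is a joint property of $A_t$ and the potential outcomes.

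The Poisson-increment repair fails for the same reason. Exchangeability would require $\mathcal{L}_A(Y_{t+h}\mid H_t,A_t{=}1)=\mathcal{L}_A(Y_{t+h}\mid H_t,A_t{=}0)*\mathrm{Pois}(\delta)$. This is a restriction on the observed law: it pins the mean increment to the observed conditional contrast, which varies with $H_t$. Under Model A it fails for every $\delta>0$, because both sides are Poisson mixtures and the lowest-rate component on the left has positive weight but cannot appear on the right. Your pair (A, shifted B) still shows that \emph{something} is unidentified, but that is the trivial statement. The proposition's content is that a model with no confounding beyond observed history is observationally equivalent to the zero-effect latent-state model.

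What is missing is a proof that $\mathrm{ATT}_B>0$ under Model A's law, which is how the paper argues.
\begin{enumerate}
\item Let $\pi=P_A(B_t{=}1\mid H_t)$, which lies in $(0,1)$ under Poisson emissions and nondegenerate transitions. Bayes' rule with $p(1)>p(0)$ gives $P_A(B_t{=}1\mid H_t,A_t{=}1)>\pi>P_A(B_t{=}1\mid H_t,A_t{=}0)$. So the event carries information about the latent state beyond the history.
\item $A_t$ affects nothing, and the future is independent of $(H_t,A_t)$ given $B_t$. Hence $E_A[Y_{t+h}\mid H_t,A_t]=\sum_b P_A(B_t{=}b\mid H_t,A_t)\,E_A[Y_{t+h}\mid B_t{=}b]$.
\item Persistence together with $\lambda_Y(1)>\lambda_Y(0)$ gives $E_A[Y_{t+h}\mid B_t{=}1]>E_A[Y_{t+h}\mid B_t{=}0]$.
\end{enumerate}
The inner contrast is the product of the two positive gaps from steps 1 and 3, so $\mathrm{ATT}_B>0=\mathrm{ATT}_A$.

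Also keep the identified set separate from Model B. $(-\infty,E[Y\mid A{=}1]]$ is the assumption-free set, coming only from $E[Y^0\mid A{=}1]\ge 0$. It is not a property of Model B, which point-identifies the ATT. And ``$\delta$ any real number'' overshoots the set's upper endpoint.
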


\noindent
This is a standard kind of observational non-identification result; what is specific to this
setting is its \emph{reach} --- in high-frequency logs the configuration is generic rather than
a constructed counterexample --- and Appendix~\ref{app:idset} does not stop at the impossibility:
it states the honest billing of the construction and bounds what the observed law does pin down.
\begin{prop}[Temporal admissibility is not adjustment validity]\label{prop:filter}
(a) A state estimate $\hat B_{it}=f(\{O_{is}\}_{s<t})$ constructed from strictly earlier
information is not a descendant of $A_{it}$, provided activity before $t$ is unaffected by
anticipation of the event or by earlier treatments in the same sequence (a no-anticipation
condition that recurring events can violate): conditioning on it cannot introduce post-treatment
bias.
(b) Conditioning on $\hat B_{it}$ identifies the effect only if it yields conditional mean
exchangeability, $E[Y^{a}\mid A_{it},\hat B_{it},X]=E[Y^{a}\mid \hat B_{it},X]$ (full independence
is sufficient but not necessary) --- in words, that moments with the same $\hat B$ would have had
the same average outcome absent the event; a noisy proxy of the true state generally
does not, and in the binary monotone model used here, \emph{under a nondifferential proxy}
($\hat B\perp A\mid B$), the residual contrast retains the sign of the
unadjusted bias (in general its magnitude and sign depend on the proxy and the causal structure).
Past-only filtering is therefore necessary hygiene, not a license to interpret the adjusted
estimate causally.
\end{prop}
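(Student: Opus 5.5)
The statement has two parts, and I would prove them separately. Part (a) is a graphical argument. Part (b) is an identification equivalence followed by an explicit calculation in the binary monotone model.

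\emph{Part (a).} I would set up the time-indexed causal graph in which $O_{is}$ for $s<t$ has parents among $B_{is}$ and earlier variables, and $A_{it}$ is a parent only of variables indexed at or after $t$. The no-anticipation condition is exactly the assumption that no edge runs from $A_{it}$, or from anticipation of it, into any $O_{is}$ with $s<t$. Earlier treatments $A_{is}$, $s<t$, count as part of the history held fixed, as in the estimand of the preceding paragraph. Under this condition the set of ancestors of $\{O_{is}\}_{s<t}$ excludes $A_{it}$. Since $\hat B_{it}$ is a deterministic function of these variables, $\hat B_{it}$ is not a descendant of $A_{it}$. The standard fact that conditioning on non-descendants of treatment cannot block or open causal paths emanating from $A_{it}$ then gives the absence of post-treatment bias. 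I would also note briefly that a violation of the condition restores the edge, so the proviso is sharp.

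\emph{Part (b), the identification claim.} Under conditional mean exchangeability, the g-formula gives $E[Y\mid A{=}a,\hat B,X]=E[Y^a\mid A{=}a,\hat B,X]=E[Y^a\mid \hat B,X]$ by consistency. Averaging over the treated distribution of $(\hat B,X)$ then identifies the ATT. Conversely, the adjusted contrast equals the ATT plus
\[
E[Y^0\mid A{=}1,\hat B,X]-E[Y^0\mid A{=}0,\hat B,X],
\]
which is just the episode-selection decomposition of \S\ref{sec:theory} applied within strata. This residual vanishes exactly when exchangeability holds for $Y^0$. Full independence implies mean exchangeability trivially, which settles the parenthetical remark.

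\emph{Sign retention under a nondifferential proxy.} This is the main obstacle. I would take binary $B$, $A$ and $\hat B$, with $\hat B\perp A\mid B$ and $\hat B\perp Y^0\mid B$ (nondifferential). Monotonicity means $\mu_b=E[Y^0\mid B{=}b]$ and $p_b=P(A{=}1\mid B{=}b)$ are both increasing in $b$. Within a stratum $\hat B=k$, the residual equals
\[
(\mu_1-\mu_0)\bigl[P(B{=}1\mid A{=}1,\hat B{=}k)-P(B{=}1\mid A{=}0,\hat B{=}k)\bigr].
\]
This is the same state-effect-times-selection-gap factorisation as in Appendix~\ref{app:proofs}. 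By Bayes' rule the bracket has the sign of $p_1-p_0$, because conditioning on $\hat B{=}k$ only reweights the prior odds of $B$. The odds ratio contributed by $A$ is $p_1/p_0$ against $(1-p_1)/(1-p_0)$ and does not depend on $k$, by nondifferentiality. Hence every stratum carries a bias with the same sign as the unadjusted bias $(\mu_1-\mu_0)(p_1-p_0)$, and so does any average over strata.

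To support the remark that in general the sign is not fixed, I would give a short counterexample with a differential proxy. One such construction lets $\hat B$ depend on $A$ through a collider. Another lets $B$ take three levels with non-monotone selection. Either shows the residual sign can flip. The closing sentence of the proposition then follows by combining (a) and (b).
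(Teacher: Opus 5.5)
Your proposal is correct and follows the paper's route. Part (a) is the same non-descendant argument. In part (b), your within-stratum factorization $(\mu_1-\mu_0)\bigl[P(B{=}1\mid A{=}1,\hat B{=}k)-P(B{=}1\mid A{=}0,\hat B{=}k)\bigr]$ with a $k$-free likelihood ratio is the Bayes step of Remark~\ref{prop:fe} with $\hat B$ in place of $i$, and it supplies the sign-retention derivation that the paper's proof only asserts.

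Two small precisions. First, the factorization needs $Y^{0}\perp(A,\hat B)\mid B$ jointly; the DAG delivers this, but your pairwise independences do not imply it on their own. Second, if your sign-flip example has $\hat B$ genuinely depend on $A$, that $\hat B$ is a descendant of $A$ and leaves the setting of (a); the relevant differential case is an event that fires on $\hat B$'s own inputs, the landmark rule discussed after the proof in Appendix~\ref{app:proofs}.
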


Proposition~\ref{prop:filter}(b)'s sign guarantee assumes a \emph{nondifferential} proxy, an
assumption a landmark-style eligibility rule violates by construction; Appendix~\ref{app:proofs}
exhibits both sides in simulation --- the conclusion survived where the assumption did not --- so
we treat adjusted residuals as attenuated associations of unverified sign, and \S\ref{sec:emp} is
worded accordingly.

Proposition~\ref{prop:nonid} is the reason the paper's empirical strategy is falsification rather
than estimation: on a single surface, ``episode continuing'' and ``event working'' can be the same
data.
Proposition~\ref{prop:filter}(b) is where proxy methods earn their keep; the proximal
literature gives the exact conditions under which multiple imperfect proxies restore
identification~\cite{miao2018proxy,tchetgen2024proximal}, and our cross-surface adjustment
(\S\ref{sec:emp}) should be read as constructing such proxies without yet verifying those
conditions.
The real-data version of this argument is the known-null experiment of \S\ref{sec:null}.

\section{Events Sit Inside Cross-Surface Episodes}\label{sec:emp}
The known-null experiment shows what episode timing can produce; this section shows the selection
mechanism behind it is real, general across event types, and visible across surfaces (the
simulation of \S\ref{sec:sim} then shows why common adjustments fail against it).
We use an opt-in panel that records the same
users' web browsing, searches, and conversational-AI use over the same weeks.
In keeping with the
panel's terms we do not report aggregate panel size, per-analysis user counts, or per-cell sample
sizes; distributions, rates, ratios, and user-clustered intervals are reported throughout.
In their place we quantify concentration, on the per-user-capped event set used throughout:
the largest single user's share of the analyzed AI events is well under one percent,
the effective sample size under event weighting is $0.50$ of the user
count\footnote{$\mathrm{ESS}=(\sum_i n_i)^2/\sum_i n_i^2$ over per-user event counts $n_i$,
expressed as a fraction of the number of users.}, and on the headline estimand the most
influential user contributes $7\%$ of the summed absolute leave-one-out influence, the ten most
influential $23\%$, with an effective cluster count of $0.12$ of the analyzed users
(Appendix~\ref{app:bootval} works through what that tail does to inference).
No single user drives the result, but the influence distribution has a tail, and any reading of the
intervals should carry that.
The panel demonstrates the mechanism and carries the known-null pseudo-event experiment
(\S\ref{sec:null}); controlled estimator validation with known injected effects comes from public
data (\S\ref{sec:public}), where anyone can rerun it.

\paragraph{The logs are bursty.}
Inter-event times between a user's page views are heavy-tailed, not exponential: a burstiness
coefficient\footnote{The burstiness coefficient $B=(\sigma-\mu)/(\sigma+\mu)$ of the
inter-event-time distribution; distinct from the latent state $B_{it}$ of \S\ref{sec:theory}.} of
$B=0.72$ (a Poisson stream is $0$), a median gap under a minute, and a long tail past the $90$th
percentile.
Activity arrives in tight episodes separated by long quiet
gaps~\cite{barabasi2005origin,karsai2018bursty}.

\paragraph{Every kind of event we examine sits in an episode.}
The selection is general, not an AI story.
Anchoring on a user's first visits to shopping, news,
coding/documentation, and reference sites, pre-event browsing runs above a within-user placebo in
every case: $3.6\times$\,[$3.1,4.1$] for shopping, $2.3\times$\,[$2.0,2.6$] for news,
$5.3\times$\,[$4.5,6.3$] for coding/docs, $3.8\times$\,[$3.4,4.3$] for reference, and
$3.2\times$\,[$2.9,3.5$] for AI.
The placebo nets each user's average moment, so these elevations
say the events concentrate in the user's busiest spans, which is exactly the selection at issue.
Whatever the focal event, it tends to occur when the user is already deep in a task episode.
Around AI responses the shape is the one in
Figure~\ref{fig:eventtime}: a climb to a peak about eight minutes before the event, then decline,
exactly the asymmetric-burst configuration that breaks pre/post differencing in \S\ref{sec:sim}.

\begin{figure*}[t]
\centering
\includegraphics[width=\textwidth]{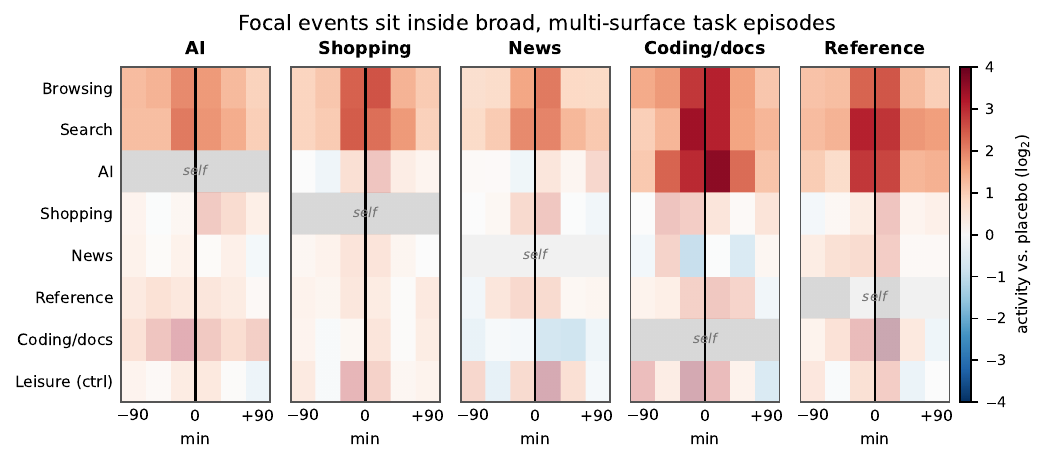}
\caption{Lead-lag structure of the episode, across surfaces. Each panel is a focal-event domain;
rows are activity surfaces; columns are 30-minute bins from $-90$ to $+90$ minutes; color is
activity relative to a within-user placebo (log$_2$).
Each panel's own anchor surface is masked
(``self'': its elevation is mechanical), and cells with a thin placebo base are faded.
In every
panel the densely-observed rows (browsing, search) are elevated \emph{before} the event and the
elevation passes smoothly through time zero, the signature of an unfolding multi-surface episode
rather than a surface-specific response switching on at the event.}
\Description{Five heatmap panels, one per focal-event domain (AI, shopping, news, coding/docs,
reference).
Each shows eight activity surfaces by six 30-minute time bins from minus 90 to plus 90
minutes.
The anchor surface row in each panel is masked grey and labeled self.
The browsing and
search rows are red (elevated) across the window in every panel with no sharp onset at time zero;
sparser rows are faded and mostly light red.}
\label{fig:heatmap}
\end{figure*}

\paragraph{The episode is synchronized across surfaces, before and after the event.}
Single-surface diagnostics read one stream at a time; the panel follows the same users across
surfaces at once, and Figure~\ref{fig:heatmap} uses that to show the episode's full lead-lag
structure.
For each of five focal-event domains, the densely-observed surfaces, browsing and
search, run above a within-user placebo through the whole $\pm 90$-minute window, rising before
the event and passing smoothly through time zero; the sparser surfaces (shopping, news, reference,
coding, a leisure negative control) are noisier cell by cell but predominantly sit above placebo
on both sides of the event.
The broad pre-event elevation is inconsistent with the simplest causal
reading, an isolated response that begins at time zero; an unfolding episode lights the stack ahead
of the event, which is what the data show.
The
one-hour post-window cross-section of this structure --- every focal domain against every outcome
surface, with intervals, multiplicity, and the exact surface allowlists --- is in
Appendix~\ref{app:matrixci}: the densely-estimated search column runs $3$ to $6\times$ across all
five focal domains, one leisure negative-control cell moves clear of $1$, and sparse cells are
treated as unestimable rather than interpreted --- including a news$\to$coding/docs cell pinned
near zero, which a construction that manufactured co-activity everywhere could not produce.
Surfaces are exact host-string membership tests against small hand-curated allowlists, so a
cell's activity is a floor for its category, the direction a negative control wants.

Leisure is a useful but
imperfect negative control, since a real interaction could extend sessions or trigger task
switching; the strongest negative control in the paper is the engineered pseudo-event exposure of
\S\ref{sec:null}, whose irrelevance holds by construction. We use ``task episode'' for intuition, but the claim needs less: focal events are
selected into high-intensity within-user states that also predict outcomes, whether or not every
co-active surface belongs to one coherent task.

\paragraph{Cross-surface state proxies attenuate most of the naive association.}
The breadth can also be put to work: surfaces other than the treatment and the outcome are candidate
\emph{proxies} for the latent state, in the proximal spirit of \S\ref{sec:rw} (without claiming
its completeness conditions).
Take search
as the outcome and build a burst index from every surface that is neither search nor an AI tool.
Because the
index excludes the outcome and the focal surface entirely, the attenuation is non-circular;
conditioning can still change the estimand or the comparison population, so we read the shrinkage
as attenuation, not as bias removed.
The naive AI$\to$search lift in the hour after a response,
$3.0\times$\,[$2.6,3.4$]\footnote{Table~\ref{tab:matrixci}'s AI$\to$search cell reads
$3.0$\,[$2.6,3.5$]: the matrix uses its own pooled construction
(Appendix~\ref{app:matrixci}) with an independently seeded anchor subsample, so the two differ
slightly.}
over a stabilized within-user placebo (user-weighted, events capped per user), falls to
$1.73\times$\,[$1.54,1.95$] under a \emph{past-only} index
and $1.35\times$\,[$1.20,1.54$] under a both-sides index (residual $1.31$--$1.48\times$ across $4$
to $16$ index bins): a $63$ to $82\%$ reduction in the excess association.
The residual may contain true effect, residual episode state, or both
(Proposition~\ref{prop:filter}(b)), and we do not read it causally.
The within-event \emph{share} of activity stays close to
$1$ ($\ShareROM\times$\,[$\ShareROMLo,\ShareROMHi$]), the mix-question answer of
Proposition~\ref{prop:share}: a $3\times$ volume lift shrinks to a $\ShareMixPct$ mix shift once
contemporaneous activity is divided out (users weighted equally, over the $\ShareMaskPct$ of
matched users whose share is defined in both arms --- a population worth naming, because the
unmasked set gives a different number and the difference is the population, not the estimand).
That is the ratio of means; the mean of ratios reads lower ($\ShareMOR\times$) for the reason
\S\ref{sec:theory} gives, and the drop it requires is sharply differential across arms
(Appendix~\ref{app:weighting}).
None of this is available to a single-surface log, which is the measurement point: the same
breadth that lets an episode masquerade as five different ``effects'' on five surfaces is what
makes the shared activity state observable.

\section{A Zero-Effect Simulation}\label{sec:sim}
We simulate a world in which the focal event has \emph{exactly zero} causal effect, so any non-zero
estimate is episode-selection bias.
The simulation is a constructive proof of
Proposition~\ref{prop:nonid}'s practical consequence: under a plausible episode-selection process,
common event-window adjustments can fail together, and nothing in the single-surface data warns
the analyst.
It is not a calibration of real-world bias sizes.
$4{,}000$ independent trajectories follow a two-state Markov burst process with rare, persistent
bursts in which outcome, negative-control, and other-activity rates all rise and the focal event
is far more likely to fire, while affecting nothing; the design is fully specified for rerunning
without the code (all rates, transitions, windows, and the seed are in Appendix~\ref{app:hmm}).
Because outcome and other activity scale by $30\times$ and $20\times$ across states, the
proportional-burst condition of Proposition~\ref{prop:share} fails by construction, which is why
the share below is small but not null; intervals are user-clustered.

\begin{figure*}[t]
\centering
\includegraphics[width=\textwidth]{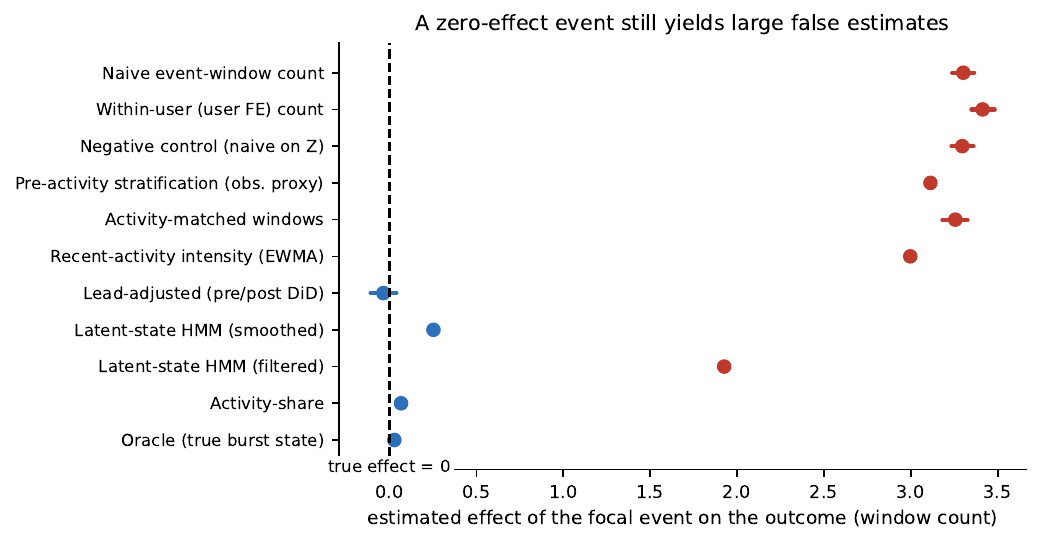}
\caption{Estimated effect of a \emph{zero-effect} focal event, by estimator (window count;
user-clustered 95\% intervals where shown).
The naive contrast, user fixed effects, the negative control,
activity matching, pre-activity stratification, and a recent-activity intensity summary all
manufacture a large false effect ($\approx\!+3.0$ to $+3.4$): each conditions on \emph{past observed}
activity, an errored proxy for the instantaneous latent state.
The oracle on the true state and the two-sided \emph{smoothed} Poisson-HMM recover the count-scale
null. The activity-share contrast is small but non-zero because outcome and other activity do not
scale proportionally across burst states; the genuinely
past-only \emph{filtered} forecast (from the transition model, using no post-event activity) removes
only part of the bias ($+1.93$), because forecasting a short latent burst forward is itself hard.
Pre/post differencing recovers the null only under this symmetric burst.}
\Description{A dot-and-interval plot of eleven estimators. Naive, within-user fixed effects,
negative control, pre-activity stratification, activity matching, a recent-activity intensity
summary, and the past-only filtered forecast sit well above zero (the first six near +3, the
filtered forecast near +1.9); the two-sided smoothed hidden-Markov estimator,
lead-adjusted difference-in-differences, and an oracle on the true burst state sit near zero, while
the activity-share contrast is small but detectably positive.}
\label{fig:simbias}
\end{figure*}

\paragraph{Common adjustments leave nearly all of the bias intact.}
Figure~\ref{fig:simbias} reports the result.
User fixed effects, activity matching,
pre-activity stratification (a within-observed-history proxy, not inverse-propensity weighting),
and an exponentially-weighted recent-activity intensity (a Hawkes-style summary, not a fitted
point-process model) land between $+2.99$ and $+3.41$ against a true $0$, indistinguishable from
the naive $+3.30$.
The crude history summaries we test share a
failure mode: each conditions on \emph{past observed} activity, and in this DGP past activity is an
errored proxy for the short, latent burst that selection conditions on
(Proposition~\ref{prop:filter}(b) in action).
The negative control moves just as much ($+3.30$): an
outcome the event cannot affect shows the same ``effect,'' which is the diagnostic working, the
design is reading episode intensity.
The activity \emph{share}
($+0.07$\,[$0.06,0.08$]) is far less distorted than the count but is not null. In this DGP the
outcome rate scales $30\times$ from quiet to burst while other activity scales $20\times$, so the
proportional-burst condition of Proposition~\ref{prop:share} does not hold: share is an
alternative mix estimand, not an automatic repair for a count estimand. Sweeping the
proportionality gap, the share bias crosses zero at equal scaling ($\PropCross$) and grows in the
direction of whichever series scales faster --- the condition fails gracefully, and the sign of
the distortion is diagnosable from the two scalings rather than assumed.

\paragraph{The negative control is not a universal alarm.}
A diagnostic that fires everywhere is worthless. Varying how much the control outcome loads on
the burst state, the control reads $\NCFlat$ at zero loading --- flat, no alarm --- and $\NCFull$
at full loading, tracking the contaminated estimate: sensitive to the thing it is supposed to
detect and silent otherwise, the operating characteristic \S\ref{sec:remedies} needs.
The \emph{pre/post difference}
($-0.04$\,[$-0.11,0.04$]) recovers the null only because this burst is symmetric: when we
concentrate focal events in the \emph{tail} of each burst, the shape the real logs show
(\S\ref{sec:emp}), pre/post reports a large spurious \emph{negative} effect
($-1.07$\,[$-1.15,-0.99$]) while the share is unchanged ($+0.06$\,[$0.05,0.07$]).
The apparent sign
of the ``effect'' is set by where in the episode the event falls.
The panel supplies the antecedent rather than the reversal: real events sit in the tail-of-burst
shape this configuration describes (\S\ref{sec:emp}), and their apparent lift falls with episode
age (\S\ref{sec:null}), but every real-data reading we report is positive --- the sign change
itself is a simulation result.
Removal is possible but demands the treatment-relevant state, and how much is recovered depends
on the estimator, not on whether it looks forward: an oracle on the true burst state recovers the
null ($+0.03$); a \emph{two-sided smoothed} latent-state estimate --- admissible only when the
event cannot move the state-recovery stream --- recovers most of it ($+0.25$, ${\sim}92\%$); a
\emph{genuinely past-only} filtered forecast removes a substantial but incomplete part ($+1.93$,
${\sim}42\%$), because forecasting a short latent burst forward is hard.
The lesson is not past-versus-future information as such but how much treatment-relevant state
the estimator captures, and that even a contamination-immune past-only estimator leaves real
residual (Proposition~\ref{prop:filter}(b)); estimation detail, including a misspecified
continuous-state stress test, is in Appendix~\ref{app:hmm}.

\subsection{Public plasmodes with known injected effects}\label{sec:public}
The pseudo-event experiment gives a known \emph{null} on proprietary data; public
\emph{plasmodes} --- benchmarks built by injecting a known effect into real data, so the right
answer is known --- give
known \emph{non-zero} effects on data anyone can download, completing the benchmark: a diagnostic
should both refuse to manufacture an effect where none exists and refuse to erase one that does.

On MovieLens, a naive event window returns $25\times$ a known injected effect, a strictly-past
adjuster barely helps, and a local-intensity smoother that straddles the event recovers the truth
--- admissible there only because the injected effect cannot move the stream being smoothed
(Proposition~\ref{prop:filter}). On Wikipedia daily views, a genuinely past-only Poisson-HMM
forecast removes $58\%$ of the excess at eight states.
Table~\ref{tab:bench} (Appendix~\ref{app:bench}, with both constructions in full) collects the
controlled tests; the pattern across rows is the paper's estimation claim in table form:
latent-state adjustment beats the crude observed-history summaries everywhere, and how much of
the excess it removes tracks how recoverable the state is.
A further discriminant check on real (non-injected) Wikipedia spikes validates the
pre-event-elevation diagnostic at scale: on mechanically collected, LLM-coded events (the coder
blinded to the pageview series) it
separates anticipated from surprise timing with AUC $0.97$ --- measured in aggregate-series form
on daily article views, not in the within-user paired form the tool implements
(Appendix~\ref{app:discriminant}).

\section{What Changes the Answer}\label{sec:remedies}\label{sec:disc}
The possible responses to endogenous time zero fall into four categories. They are not
interchangeable, and a study should say which one it is offering.

\emph{Falsification tests} ask whether a naive estimate is safe to interpret. Examples include the
pre-event trajectory, a negative-control outcome~\cite{lipsitch2010negative}, an active-window
placebo, and, where the data allow it, a pseudo-event experiment.

\emph{Alternative estimands} ask a different question that may be less exposed to episode volume.
Options include activity share, episode-level outcomes, and first-observed events. Activity share
is a mix question that is null-preserving under proportional scaling
(Proposition~\ref{prop:share}) but subject to compositional constraints
\cite{aitchison1986compositional}. First-observed events are left-censored, and treatment can
accelerate them, so their placebos require care.

\emph{Bias-reduction methods} try to shrink the confounding without claiming to remove it. These
include matching to the same user's similarly active non-event windows; matching whole episodes,
as the known-null experiment does; lead-adjusted event studies; inverse-intensity
weighting~\cite{robins2000marginal}; past-only latent-state filtering on a stream the event does not
affect (Proposition~\ref{prop:filter}); and cross-surface leave-one-out indices used as state
proxies.

\emph{Identification} requires something more: exogenous or randomized timing, an instrument, a
rollout, or verified proxy conditions of the proximal kind
\cite{miao2018proxy,tchetgen2024proximal}. The other three categories can diagnose or reduce bias,
but they do not by themselves identify a causal effect.

\paragraph{The episode-level comparison, audited.}
The default comparison we recommend, episodes with and without the focal event, is itself an
observational contrast, so we run it and audit it. Among completed episodes (the segmentation of
Appendix~\ref{app:completed}; episodes containing at least one AI response are $7.7\%$\,[$6.9,8.6$] of the
total), the search share of within-episode activity is \emph{higher} with AI present, not lower:
$+1.2$ to $+1.9$ percentage points across matched and unmatched constructions and both share
estimands (user-clustered intervals exclude zero by small margins), and
$+2.2$ when the intensity match excludes the outcome count. There
is no substitution deficit to see. The calibration is a presence placebo: marking episodes
instead by the presence of a news pageview, with news treated exactly as AI is so presence is a
pure marker, shifts
the search share by $+4.2$ to $+5.9$ points, three to four times the AI contrast --- direct
evidence that presence selects episode type and that the proportional-burst condition of
Proposition~\ref{prop:share} fails across presence-selected types, so an episode-level contrast
gets read only against a presence-placebo rung.
Against that benchmark, the small AI surplus reads as residual selection, not an effect. The
episode-level comparison remains the right default; the substitution question stays open only
at the semantic level the trace cannot reach.

\paragraph{A diagnostic protocol.}
The falsification battery packages as a checklist a reader can demand of any user-timed event study
(Figure~\ref{box:standard}); \texttt{burstcheck}, a dependency-light package (numpy only), automates
the \emph{core screening diagnostics} from that checklist: five always-computable checks
(pre-event elevation, count against placebo, share against placebo, count/share divergence, and
window-length sensitivity), plus a sixth, negative-control movement, when the analyst supplies a
control outcome. It returns a risk flag and a red-flag count with plots; the pre-event-elevation
screen is the one validated at scale in Appendix~\ref{app:discriminant}.
Appendix~\ref{app:redflag} shows the protocol read off its
bundled zero-effect demo, where the tool flags five of its six checks, defines the comparison
windows operationally, and notes which items call for study-specific modelling.
The protocol does not certify causal effects; it determines whether a naive
event-window contrast has earned a causal reading, the service a pre-trend plot provides in
difference-in-differences.
The arXiv source bundle includes a companion release containing the simulation, public-data
plasmode benchmarks, \texttt{burstcheck}, and the LLM-coding prompts and outputs. The proprietary
panel data and panel-specific analysis outputs cannot be released under the provider agreement;
the companion release supports the known-answer simulations and public-data benchmarks, not
independent replication of the panel estimates.

\begin{figure}[t]
\centering
\fbox{\begin{minipage}{0.92\columnwidth}
\footnotesize
\textbf{An event-window audit for behavioral logs.}\\[2pt]
Before a user-timed focal event is read as a treatment, report:
\begin{enumerate}[leftmargin=1.5em,itemsep=0.8pt,topsep=2pt,label=\arabic*.]
\item the pre-event outcome trajectory;
\item the pre-event total-activity trajectory;
\item count \emph{and} activity-share versions of the outcome;
\item window-length sensitivity;
\item a negative-control outcome;
\item an active-window placebo;
\item a same-user matched active-window comparison;
\item strictly pre-event episode age, and effect-by-age. Whole-episode position needs
post-event activity to define the episode, so report it only as a sensitivity (\S\ref{sec:null});
\item the estimand's weighting, and the estimate under at least one alternative
(Appendix~\ref{app:weighting} moves a naive lift from $2.5$ to $3.1\times$ on weighting alone);
\item the anchor distribution by pre-event activity level, and the estimate within each level ---
the exposure is not constant across them (\S\ref{sec:null});
\item episode-level or latent-state adjustment where feasible, past-only by default;
\item for episode-level contrasts, a presence placebo (a non-focal marker run through the
same pipeline).
\end{enumerate}
\end{minipage}}
\caption{The diagnostic protocol. Twelve items. A study that reports them lets a reader judge whether a
user-timed event is a treatment or a marker of an unfolding episode.}
\Description{A boxed checklist titled ``An event-window audit for behavioral logs,'' listing twelve
items every study anchored on a user-timed focal event should report: the pre-event outcome
trajectory; the pre-event total-activity trajectory; count and activity-share versions of the
outcome; window-length sensitivity; a negative-control outcome; an active-window placebo; a
same-user matched active-window comparison; strictly pre-event episode age and effect by age;
the estimand's weighting under an alternative; the anchor distribution by pre-event activity level
with the estimate within each level; episode-level or latent-state adjustment where feasible; and a
presence placebo for episode-level contrasts.}
\label{box:standard}
\end{figure}

\paragraph{A worked case.}
Appendix~\ref{app:casestudy} runs the audit line by line on a stylized naive event-window finding of the
kind now common in industry reporting, a double-digit percentage-point ``shopping appetite'' after
AI responses, and every line moves; what survives on the same data are the compositional and discrete
estimands, with their own caveats stated.

\paragraph{The practical shift.}
The practical shift this paper argues for is in the default unit of measurement, from the event to
the episode it belongs to, and in the default comparison, from event-time-versus-ordinary-time to
similar episodes with and without the event, themselves read against a presence placebo.
Once a panel follows one user across surfaces, the episode
that confounds a single-surface study becomes the thing the study can finally see.
The pseudo-event experiment is the cleanest statement we can construct: on real data, timestamps
with a true effect of exactly zero generate most of the apparent lift, under a construction whose
every design variable is strictly pre-event --- a demonstration of what timing alone can produce,
without decomposing the real association.
The within-episode share test above answers the sequel question that some
null constructions raised: AI events do not sit below their own episodes' search share, and the
small surplus they show is dwarfed by a presence placebo.

\paragraph{How common is the design?}
Our attempted prevalence audit failed informatively --- the
screen under-detects exactly the design being counted --- so we make no prevalence claim in
either direction (Appendix~\ref{app:litaudit}). The argument is conditional and complete as such:
\emph{wherever} a study anchors on a user-timed event, the burden the checklist describes applies,
and the settings where that anchor is unavoidable are easy to name --- assistant adoption is not
rolled out to a random half of users.
Validity there rests on the focal event being exogenous
in time conditional on a valid adjustment set (or on another source of identification), and
user-timed events often threaten that requirement in a specific, diagnosable way: the
user acts \emph{because} the episode is underway, so the window measures the episode.

\paragraph{What the diagnostics can and cannot conclude.}
The diagnostics and the cross-surface adjustment are falsification devices, not causal estimators.
They establish that a naive event-window estimate is unsafe and show that episode selection can
reproduce a substantial share of the observed association; they do not identify how much of the
real association is confounding or certify the adjusted residual, which mixes true effect with the
state the proxies miss (Proposition~\ref{prop:filter}(b)).
Certifying it requires design-based variation, which a
single-surface observational log does not supply on its own (Proposition~\ref{prop:nonid}).
Where
such variation exists, the right benchmark compares the naive, adjusted, and design-based
estimates on the same events; building one is the natural next step for this line of work.

\paragraph{Limitations.}
\emph{Disclosure.} The panel's terms prevent us from reporting its size, per-analysis user counts,
or per-cell sample sizes. We instead report rates, ratios, intervals, and the influence statistics
of \S\ref{sec:emp}; all known-answer estimator validation uses public data. The Ethics statement
below distinguishes this contractual restriction from participant protection.

\emph{Scope.} The primary matched-set result applies to washout-aligned, landmark-active responses
with a positive baseline and matched support, not to all assistant use. Landmark-active responses
are $31\%$ of the per-user-capped response set before the other restrictions. After the
$60$-minute washout and landmark requirement together, eligible anchors are $5.8\%$ of all
in-panel AI responses and come from just over half the users who have any. The headline therefore
describes a narrow set of detectably active, washout-isolated moments. Quieter or unobserved
episodes may have less timing confounding, and we do not extrapolate to them.

\emph{Inference.} The adjustments reduce excess association without identifying effects. Their
confidence intervals condition on the fitted state model and do not propagate its estimation
error. The cross-surface conclusions rest on the precisely estimated cells, not the wide-interval
pairs (Appendix~\ref{app:matrixci}). We demonstrate a failure mode with a known-answer construction
and provide an audit that detects it; we do not provide a general estimator for user-timed events,
and we are skeptical that one exists without additional design.

\section{Conclusion}\label{sec:conclusion}
A user-timed event is not automatically an exogenous time zero. It may instead mark the moment a
person acts during an episode that is already raising the outcome. In the narrow active population
studied here, timestamps with a true effect of exactly zero reproduce a substantial share of the
apparent post-event lift. This known-null result shows what episode timing alone can produce; it
does not decompose the real association into confounding and causal effect.

The practical response is to change both the comparison and the standard of evidence. Studies
should compare similar episodes, report count and share estimands, inspect pre-event activity, and
use negative controls and active-window placebos before giving an event-window contrast a causal
reading. These diagnostics can reveal an unsafe design, but only design-based variation or
verified identification conditions can establish the remaining effect. When behavior unfolds in
bursts, the episode is not background noise around the event. It is part of the design.

\section{Ethics and Human-Subjects Statement}\label{sec:ethics}
This is secondary analysis of previously collected, opt-in, de-identified web-browsing, search,
and conversational-AI events; panelists consented to behavioral measurement under the panel
provider's terms, the authors did not intervene on user experience or contact participants, and
no new data were collected for this study.
The authors are not affiliated with a university and the work was not submitted to an
institutional review board; no IRB approval or exemption was sought or obtained, and we say so
rather than leave it to be inferred.
Consent and withdrawal operate through the panel provider, and pseudonymous clickstreams remain
quasi-identifying despite de-identification~\cite{su2017deanonymizing}, so our protection is what
never leaves the analysis environment: the analyses use event timestamps and domains only --- no
page content, no conversation text --- and results are reported in aggregate as rates, ratios,
and user-clustered intervals.
Appendix~\ref{app:ethicsdetail} states the review-status, secondary-consent, withdrawal, and
privacy limitations in full.
Separately, the provider's terms forbid publishing panel size, per-analysis user counts, or
per-cell sample sizes. That restriction is \emph{contractual}, not a participant protection, and
we do not want it read as one: it costs the reader information they are entitled to want, which
is why \S\ref{sec:emp} reports influence and effective-cluster statistics, and the limitations
above an attrition ladder, in its place.

\paragraph{Competing interests.}
\ifanon The authors are employed by a company that builds commercial measurement products for
AI-mediated web traffic;\else The authors are employed by Scrunch AI, which builds commercial
measurement products for AI-mediated web traffic;\fi\ the paper's critique of user-timed
event-window measurement applies squarely to that category of product, the authors' own included.
Appendix~\ref{app:casestudy} therefore turns the same audit on a stylized finding of exactly that
kind, and its naive form does not survive, so the critique is held to the standard we would apply
to anyone else's.

\bibliographystyle{ACM-Reference-Format}
\bibliography{references}

\appendix
\section{Proofs of Propositions}\label{app:proofs}
Throughout, unless stated otherwise the focal event has no causal effect, so
$Y_{i,t+h}=Y^{0}_{i,t+h}$, and $B_{it}\in\{0,1\}$ is the latent state with state monotonicity
$E[Y^{0}\mid i,B{=}1]>E[Y^{0}\mid i,B{=}0]$ and within-user selection
$P(A{=}1\mid B{=}1,i)>P(A{=}1\mid B{=}0,i)$.
Write $\lambda_Y(b),\lambda_O(b)$ for the outcome and
comparison-activity rates in state $b$.

\paragraph{The binary-state factorization.}
For the binary single-confounder structure the episode-selection bias of \S\ref{sec:theory}
factors into a state effect times a selection gap,
\[
\text{bias} = \big(E[Y^{0}\!\mid\! B{=}1] - E[Y^{0}\!\mid\! B{=}0]\big)\,
\big(P(B{=}1\!\mid\! A{=}1) - P(B{=}1\!\mid\! A{=}0)\big),
\]
a useful reading but an illustrative special case, not a general law: with continuous intensity,
multiple concurrent tasks, or outcome rates that depend on episode age, the bias has no such
clean factorization.

\begin{remark}[\mbox{Episode-selection} bias can survive fixed effects]\label{prop:fe}
In the binary selection model of Figure~\ref{fig:dag} (a two-state $B_{it}$ with positive
selection $P(A{=}1\mid B{=}1,i)>P(A{=}1\mid B{=}0,i)$, an outcome monotone in the state, and
$Y^{0}\perp A\mid(B,i)$), user fixed effects remove
stable user heterogeneity $E[Y^0\mid i]$ but not the within-user term $E[Y^0\mid i,B_{it}]-E[Y^0\mid i]$,
so the bias persists there; in general, fixed effects need not remove it.
The claim concerns the within-user mean contrast; it does not rule out
fixed-effects specifications augmented with covariates that recover the state.
\end{remark}
\begin{proof}[Derivation for Remark~\ref{prop:fe}]
The within-user (fixed-effects) contrast for user $i$ nets out the user mean $E[Y^{0}\mid i]$ and
equals $\Delta_i=E[Y^{0}\mid A{=}1,i]-E[Y^{0}\mid A{=}0,i]$.
Conditioning on the latent state and using $Y^{0}\perp A\mid(B,i)$ (the structure of
Figure~\ref{fig:dag}: selection operates only through $B$),
\[
\Delta_i=\big(E[Y^{0}\mid i,1]-E[Y^{0}\mid i,0]\big)\big(P(B{=}1\mid A{=}1,i)-P(B{=}1\mid A{=}0,i)\big).
\]
By Bayes' rule the selection inequality gives $P(B{=}1\mid A{=}1,i)>P(B{=}1\mid A{=}0,i)$, and state
monotonicity makes the first factor positive, so $\Delta_i>0$ for every $i$ despite a zero effect.
Fixed effects remove $E[Y^{0}\mid i]$ but leave $\Delta_i$, which is itself a within-user quantity.
\end{proof}

\begin{example}[Count/share divergence from one process]\label{prop:cs}
If an episode raises all activity while the event locally substitutes for the outcome, the raw
outcome count can rise around the event while the outcome's share of activity falls.
\end{example}
\begin{proof}[Derivation for Example~\ref{prop:cs}]
Let a burst scale every rate by $\kappa>1$: $\lambda_O(1)=\kappa\lambda_O(0)$ and, absent the event,
$\lambda_Y(1)=\kappa\lambda_Y(0)$.
Let the event locally substitute, multiplying the outcome rate in
its window by $1-\delta$ with $\delta\in(0,1)$, while leaving comparison activity unchanged.
The
event-window outcome count is proportional to $(1-\delta)\kappa\lambda_Y(0)$, whereas a typical
window (a mixture over states) has count proportional to $E_B[\lambda_Y]<\kappa\lambda_Y(0)$; for
$\delta<1-E_B[\lambda_Y]/(\kappa\lambda_Y(0))$ the count rises.
The event-window share is
$\frac{(1-\delta)\lambda_Y(0)}{(1-\delta)\lambda_Y(0)+\lambda_O(0)}$, strictly below the matched-burst
share $\frac{\lambda_Y(0)}{\lambda_Y(0)+\lambda_O(0)}$ because $1-\delta<1$.
Count up, share down,
from one process.
\end{proof}

\begin{proof}[Proof of Proposition~\ref{prop:share}]
In state $b$ the expected outcome share is $s(b)=\lambda_Y(b)/(\lambda_Y(b)+\lambda_O(b))$.
Event and
non-event moments differ only in their mixture over $b$, so the share contrast vanishes for every
such mixture if and only if $s(1)=s(0)$.
Cross-multiplying,
$s(1)=s(0)\iff\lambda_Y(1)\lambda_O(0)=\lambda_Y(0)\lambda_O(1)\iff
\lambda_Y(1)/\lambda_Y(0)=\lambda_O(1)/\lambda_O(0)$, the proportional-burst condition. If it fails,
$s(1)\neq s(0)$ and selection on $B$ induces a nonzero share contrast under the null.
\end{proof}

\begin{proof}[Proof of Proposition~\ref{prop:nonid}]
We construct the two models.
\emph{Model A}: the latent two-state chain $B_{it}$ (persistent:
$P(B_{i,t+h}{=}1\mid B_{it}{=}1)>P(B_{i,t+h}{=}1\mid B_{it}{=}0)$) generates
$O_{it}\sim\mathrm{Pois}(\lambda_O(B_{it}))$, $Y_{it}\sim\mathrm{Pois}(\lambda_Y(B_{it}))$,
$A_{it}\sim\mathrm{Bern}(p(B_{it}))$ with $p(1)>p(0)$, all conditionally independent given the
state path, and $Y^{a}_{it}=Y_{it}$ for both $a$ (zero effect).
\emph{Model B} is defined on the
same observable joint law via its sequential factorization: treatment at each $t$ is drawn from
the observed conditional law of $A_{it}$ given the observed history, so selection depends on
observables alone; potential outcomes are given by consistency ($Y^{1}=Y$ at treated moments,
$Y^{0}=Y$ at untreated moments) together with $Y^{0}_{i,t+h}$ at treated moments drawn from the
observed conditional law of $Y_{i,t+h}$ given the observed pre-$t$ history \emph{and}
$A_{it}{=}0$.
Model B is a coherent structural model consistent with the observed data, and its
ATT is the treated-moment average of
$E[Y_{i,t+h}\mid \mathcal{H}_{t},A_{it}{=}1]-E[Y_{i,t+h}\mid \mathcal{H}_{t},A_{it}{=}0]$, where
$\mathcal{H}_t$ is the observed history.
This is strictly positive because, under Model A's law,
(i) $A_{it}$ is informative about $B_{it}$ beyond $\mathcal{H}_t$ (the state is latent and
$p(1)>p(0)$), and (ii) the state persists into the outcome window, so a higher posterior on
$B_{it}{=}1$ raises $E[Y_{i,t+h}]$ by state monotonicity.
Both models induce the identical joint
distribution over all single-surface observables, yet $\mathrm{ATT}_A=0$ and $\mathrm{ATT}_B>0$.
Any functional of the observed law therefore takes the same value under both, and no such
functional identifies the ATT.
Additional assumptions (exogenous timing, conditional
exchangeability given an observed set, proxy completeness conditions, or design-based variation)
are exactly what rules one of the two models out.
\end{proof}

\begin{proof}[Proof of Proposition~\ref{prop:filter}]
(a) The filtered estimate $\hat B_{it}=E[B_{it}\mid\{O_{is}\}_{s<t}]$ is a deterministic
function of strictly pre-$t$ variables, none of which is a descendant of $A_{it}$ in Figure~\ref{fig:dag};
hence $\hat B_{it}$ is not a descendant of $A_{it}$ and conditioning on it cannot open a path
through a consequence of treatment.
A two-sided smoother
$\hat B_{it}=E[B_{it}\mid\{O_{is}\}_{s\le\bar t}]$ with $\bar t>t$ is a function of
post-treatment variables whenever $A_{it}\!\to\!O_{is}$ for some $s\in(t,\bar t]$, and then
conditioning on it biases the contrast even under the null.
(b) Under the null, conditioning on any $\hat B$ yields a zero contrast only if
$E[Y^{0}\mid A{=}1,\hat B]=E[Y^{0}\mid A{=}0,\hat B]$ almost surely, i.e., conditional
\emph{mean} exchangeability of $Y^{0}$ given $\hat B$ (with covariates $X$ absorbed into the
conditioning; full independence $Y^{0}\perp A\mid\hat B$ is sufficient but stronger than needed).
If $\hat B$ is a
noisy proxy of $B$, then within a stratum of $\hat B$ the true state still varies and still
predicts both $A$ and $Y^{0}$, so mean exchangeability fails and a residual contrast remains; under a
nondifferential proxy ($\hat B\perp A\mid B$) in the binary monotone model it retains the sign of
the unadjusted bias. The simulation's matching, weighting, and EWMA rows are instances.
Temporal admissibility (a) therefore does not imply adjustment validity (b): the former is about
not creating new bias, the latter about removing the old one, and only recovery of the
treatment-relevant state (or verified proxy conditions~\cite{miao2018proxy}) delivers it.
\end{proof}

\paragraph{The sign-retention condition fails in exactly the designs that need it.}
Proposition~\ref{prop:filter}(b)'s guarantee assumes a nondifferential proxy, and an
eligibility rule that reads the same activity stream the proxy is built from violates it
by construction --- which is what a landmark filter does. We can exhibit both sides of
this. In the zero-effect simulation the focal event is drawn from $p(B)$ alone, so
$P(A{=}1\mid B,\hat B)$ is flat across proxy strata (ratio of extremes $\NDLatentLo$--$\NDLatentHi$)
and the condition holds automatically; the simulation therefore \emph{cannot} test the
concern, which is worth saying because it is the natural place a reader would look. When
we instead make the event fire on observed recent activity, the panel's rule, the same
ratio rises to $\NDLandmarkQuiet$ in quiet states and $\NDLandmarkBurst$ in bursts. Sign retention nonetheless held
in both worlds (the proxy-adjusted contrast stayed positive, at $\NDLatentShare$ and $\NDLandmarkShare$ of its
naive value), so the conclusion survived where the assumption did not. We read that as
grounds for treating the adjusted residual as an attenuated association of unverified
sign rather than as a bound.

\section{The Completed-Episode Sensitivity}\label{app:completed}
The earlier version of the pseudo-event experiment summarized in \S\ref{sec:null} segments
completed task episodes on the non-focal surfaces: maximal runs of non-search, non-AI browsing
with gaps under $30$ minutes, at least $5$ page views and $10$ minutes long (the $30$-minute
inactivity threshold is the web-log sessionization convention, which traces to
\citet{catledge1995browsing}), with pseudo-events inserted into AI-free episodes at the positions
real events occupy.
Real in-episode events, which show \emph{no obvious concentration} within their episodes'
durations (position quartiles near $0.25/0.50/0.75$, consistent with the event marking no
privileged moment of the task), show $4.65\times$\,[$3.97,5.44$]; pseudo-events show
$3.65\times$\,[$3.40,3.90$], $73\%$ of the excess, rising to $4.95\times$\,[$4.43,5.60$]
($108\%$, an upper bound) under whole-episode intensity matching and $69\%$ under
pre-position-only matching.
Both real and pseudo curves rise and fall through time zero against a flat placebo; the sharp
pseudo-event peak at time zero is boundary geometry, not local selection, since an in-episode
anchor's near bins lie inside the episode while its far bins leak past the episode's edges.
The caveat, and the reason the landmark design is primary, is that completed-episode
membership, duration, and normalized position all use browsing \emph{after} the event, which a
real treatment could itself have altered; the landmark construction repeats the test with every
design variable strictly pre-event and reaches the same conclusion.
That matched null moments can
approach or exceed the real events' association is what the count/share paradox predicts under
episode-level selection alone (\S\ref{sec:theory}), with or without any local substitution; the
direct within-episode share test in \S\ref{sec:remedies} finds no substitution deficit, so the
design does not identify the direction or magnitude of any assistant effect.
This sensitivity covers $29\%$ of AI responses from users with a detected episode; that
population and its post-event-defined segmentation must not be conflated with the primary
strictly pre-event landmark estimand.

\section{A Worked Case: an AI ``Demand Lift'' That Does Not Survive the Audit}\label{app:casestudy}
A naive event-window analysis of these logs reproduces the kind of finding now common in industry
reporting on conversational AI.
Anchored on a user's AI responses, downstream domain-specific
seeking rises sharply against a within-user backward placebo, a double-digit percentage-point
``shopping appetite'' at the day grain, together with an elevated downstream vocabulary-reuse
signal.
Read literally, the answer engine whets demand.
The sketch is deliberately qualitative: no magnitude in this paragraph is a measured result of this
paper or a specific published figure; it is symbolic of the effect sizes such reports headline, and
what matters is the inferential move, not any one number.
The audited object is a composite, assembled from the shape such reports take, so what the
exercise shows is which \emph{class} of estimand survives the checks; no particular published
number is at issue.
The distinction the
audit enforces is simple: AI-referred users may well convert or search more, but a post-event lift
does not establish that the assistant \emph{made} them do so---the event may only mark an episode
already underway.

Running the audit (Figure~\ref{box:standard}), every line moves.
The pre-event
trajectories rise \emph{into} the response, not after it (\S\ref{sec:emp}).
The lift is a
\emph{count}: its activity-\emph{share} is flat, because the episode scales all behavior together.
A null-burst placebo anchored on a random event in the same user's episode reproduces the lift, the
pseudo-event result of \S\ref{sec:null} in its simplest form.
A same-day pre-conversation window
shows the identical elevation, since a backward placebo removes the user's cross-day baseline but
leaves ``today is a shopping day'' intact.
The remaining line, the matched active-window comparison, replaces the backward placebo with the
same user's similarly active consult-free windows; nothing in the preceding checks entitles the
naive lift to survive it.
Each line that can be checked here moves.

An AI-specific shock does not rescue it.
LLM outages exogenously remove the tool, yet they founder
on measurement: a pageview panel records visits, so usage looks flat through an outage as users
retry the page, denied users cannot be cleanly marked, and the reduced form is confounded by
day-of-week.
The clean design the question needs, randomized exposure with a direct AI-denied
signal, is unavailable in observational logs.

What survives is the audit's dividing line.
The confounded object is the downstream \emph{volume}
lift.
The claims that hold up better on the same data are compositional and discrete: an
activity-\emph{share} shift, a \emph{first-observed} visit to a category or source (a discrete
event with a cleaner placebo, though left-censoring and acceleration by treatment mean it is not
automatically safe), and an observable \emph{routing} handoff to a named source.
Those estimands ask mix, onset, and routing questions rather than volume questions; whatever their
sign on a given log, they are the claims the audit leaves standing.

This worked case is a boundary, not a retraction: it constrains the class of estimands, not any
specific prior finding.
Compositional, discrete, and routing estimands, and designs with exogenous or audited timing,
remain defensible on data like these; unaudited post-event \emph{volume} lifts should be read as
upper bounds consistent with pure episode continuation until they pass the checks above.

\section{Discriminant Validation on Wikipedia Spikes}\label{app:discriminant}
We separate real Wikipedia attention spikes by whether their timing was anticipated, at validation
scale.
Candidate events are collected mechanically, so selection never sees a pageview series: the
monthly ``Deaths in \ldots'' lists for 2024--2025, the year articles' event bullets (disasters,
attacks, elections, tournaments), the years' highest-grossing-film tables with release dates from
Wikidata, and an ex-ante enumeration of recurring scheduled events (award shows, championship
finals, the Olympics).
The coder is an LLM (GPT-4o-mini at temperature zero; prompts and outputs in the companion
release), and its blinding is structural: its entire input is each
event's name, date, and one-line source description, so it cannot see the pageview series it will
be evaluated against. It codes whether the date was knowable in advance;
ambiguous cases, a death after a long public illness, a rolling multi-week crisis, are dropped.
Articles must have pageview history through the full pre-event baseline window, which excludes
$42\%$ of coded events, almost all surprise-class disasters whose articles are created on the
event day; the surviving surprise class is therefore mostly sudden deaths of people with
pre-existing articles.
Of $343$ mechanically collected candidates, $329$ coded unambiguously and $191$ clear that
coverage bar: $147$ anticipated and $44$ surprise.
The direction of that attrition is worth stating,
because it is easy to read backwards. An article created on the event day has no pre-event baseline
at all, so its elevation would read as flat: those are the \emph{easy} surprise calls, and dropping
them leaves the harder residue. The exclusion therefore makes the reported discrimination
conservative rather than inflated.
What it does cost is transportability. The AUC is measured on
articles that already existed before their event, so it does not license the screen on newly created
pages --- which is the same population as the first-observed events \S\ref{sec:remedies} flags as
needing care. The threshold is additionally selected and evaluated on this one set, so its balanced
accuracy is in-sample; the AUC is threshold-free and does not inherit that.

Across the retained validation set, the pre-event elevation statistic separates the classes
nearly perfectly: anticipated median $4.8\times$ the quiet baseline against
$1.0\times$ for surprise, AUC $0.97$ (bootstrap $95\%$ interval $0.95$--$0.99$), and a single
threshold of $2.2\times$ gives in-sample balanced accuracy $0.93$ (the threshold is selected and
evaluated on this same set; the AUC does not depend on it) (Figure~\ref{fig:discriminant}).
The errors are the informative part.
The ``surprise'' events that cross the threshold had genuine pre-event attention ramps: a
farewell concert seventeen days before a death, a widely covered hospitalization.
The anticipated events that fall below it are multi-week processes whose single anchor date
dilutes the ramp (a six-week general election) or events with little English-Wikipedia salience.
The diagnostic detects anticipatory attention build-up, the thing that makes an event window
unsafe, not event type as such.
That is also the positive control the panel cannot supply on its own: even panel visits arriving
through an externally-timed channel (an email or calendar link) sit in bursts
($\approx\!5\times$ pre-event browsing, indistinguishable from search-initiated visits) because
the click still lands inside an active session.

\begin{figure*}[t]
\centering
\includegraphics[width=\textwidth]{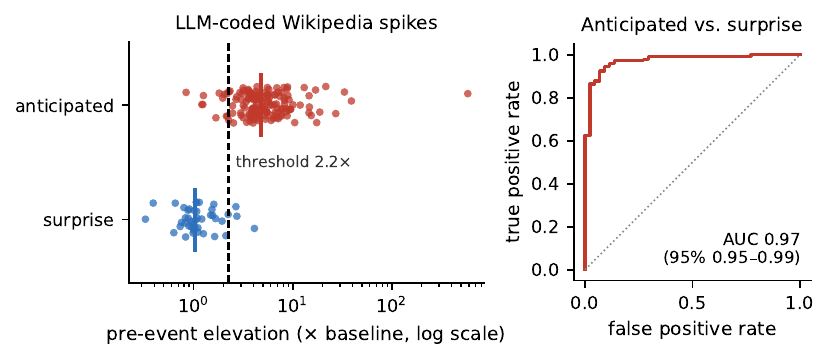}
\caption{The diagnostic discriminates, at validation scale. \emph{Left:} pre-event elevation
(views in the week before the event over a quiet baseline, log scale) for mechanically collected,
LLM-coded Wikipedia spikes; class medians marked; the dashed line is the
balanced-accuracy threshold ($2.2\times$). \emph{Right:} the ROC curve, AUC $0.97$
(bootstrap $95\%$ interval $0.95$--$0.99$).}
\Description{Left panel: strip plot of pre-event elevation on a log scale by coded class;
anticipated events cluster well above the 2.2 times threshold and
surprise events cluster near one times baseline. Right panel: an ROC curve far above the
diagonal with AUC 0.97.}
\label{fig:discriminant}
\end{figure*}

\section{Reading the Diagnostics}\label{app:redflag}
On its bundled zero-effect demo (a synthetic log where the true effect ${=}\,0$, distinct from
\S\ref{sec:sim}), \texttt{burstcheck} returns \textsc{high} risk, flagging five of its six
checks: pre-event elevation $1.50\times$ (event selected into an episode), negative-control
movement $4.05\times$ (the estimate reads general activity), count up while share is flat
($4.1/1.0\times$; burst volume, not an outcome-specific effect), and window-length sensitivity
$8.6\times$ (the estimand is episode-duration dependent). The share-versus-placebo check, reading
$1.0\times$, is the one that does not flag --- exactly as a null-preserving share should behave.
The matched-window, within-episode-position, and latent-state items of the protocol call for
study-specific modelling and are left to the analyst.
Table~\ref{tab:comparisons} defines the comparison
windows the audit and the estimator discussion use, which are related but not interchangeable.

\par\vskip 6pt
\noindent\begin{minipage}{\columnwidth}
\captionof{table}{The comparison windows, operationally.}
\label{tab:comparisons}
\small
\begin{tabular}{p{2.6cm} p{5.2cm}}
\toprule
\textbf{Comparison} & \textbf{Definition} \\
\midrule
Backward placebo & the same user's outcome over a comparable window \emph{before} the event; nets
the user's baseline, not the episode \\
Active-window placebo & a random \emph{active} moment drawn from the same user's stream, no event
required; nets general burst intensity \\
Matched active window & a non-event window matched to the event window on recent activity, time of
day, and category mix \\
Pseudo-event & a placebo event inserted into a comparable event-free \emph{episode} at the
empirical position of real events; its true effect is zero by construction \\
Leave-one-surface-out index & a task-state proxy built from every surface \emph{except} the outcome
and the focal surface; conditions the lift on cross-surface intensity \\
\bottomrule
\end{tabular}
\end{minipage}\par\vskip 6pt

\section{Simulation Design and Latent-State Detail}\label{app:hmm}
\paragraph{Data-generating process.}
$4{,}000$ independent trajectories of $480$ steps (read as five-minute bins) follow a two-state
Markov burst process, quiet-to-burst $0.02$ and burst-to-quiet $0.20$, giving rare bursts of mean
length five. Outcome $Y$, a negative-control outcome $Z$, and other activity $O$ are Poisson with
rates rising from $(0.05, 0.05, 0.10)$ when quiet to $(1.50, 1.50, 2.00)$ in a burst; the focal
event $A$ fires with probability $0.002$ quiet and $0.15$ in a burst, and affects neither $Y$ nor
$Z$. The post-event window is six steps; intervals are user-clustered over the $4{,}000$
trajectories, and the seed is fixed in the companion release.

Detail for the latent-state results of \S\ref{sec:sim}.
An \emph{oracle} that conditions on the true burst state recovers the null ($+0.03$): with the
confounder observed and positivity holding, adjustment works, so the failure above is an
estimation problem in this DGP, where the true state is well-defined.
The feasible move is to
model the latent state instead of summarizing the history: a Poisson hidden Markov model fit to an
activity stream the event does not affect.
How much it recovers depends on \emph{which} state
estimate it conditions on.
A \emph{genuinely past-only} form---the filtered posterior at the event
propagated $h$ steps through the transition model, using no post-event activity---removes a
substantial but incomplete part of the bias ($+1.93$, ${\sim}42\%$), still far better than the
exponentially-weighted intensity using the same past ($+3.00$); it is limited because forecasting a
short, fast-mixing burst forward cannot recover the \emph{realized} post-event state.
Across four
seeds it removes $42$--$47\%$ while the naive bias moves by $0.04$ and the oracle recovers the null
throughout, so the incompleteness belongs to the estimator.
A two-sided
\emph{smoothed} estimate does read the post-event activity and recovers most of the null ($+0.25$,
${\sim}92\%$), but is admissible only when the event cannot move the state-recovery stream
(Proposition~\ref{prop:filter}); reading the per-cell filtered state \emph{inside} the post-window
is the same two-sided object in disguise and carries the same assumption, not a past-only license.
It cuts the other way too: when the state model is wrong, recovery
degrades.
With the burst generated as a \emph{continuous} AR(1) log-intensity---the wrong model
class for a discrete HMM---the two-sided smoother improves with $K$ (an eight-state fit removes
$86\%$) while the strictly past-only forecast stays limited (${\sim}46\%$); performance tracks state
recoverability, not the estimator's family.
These simulation results bracket what the real-data
adjustments in \S\ref{sec:emp} can be expected to do: reduce, not certify.

\section{What the Observed Law Does Pin Down}\label{app:idset}
Proposition~\ref{prop:nonid} says the ATT is not point-identified. Stated exactly: with $Y\ge0$
the observed law restricts the ATT only through $E[Y^0\mid A{=}1]\ge0$, so the identified set is
$(-\infty,\,E[Y\mid A{=}1]]$ --- unbounded below, with Model~B's positive effect interior to it.
That does not say nothing is
learned, and the known-null design already contains the ingredients of a bound.

The construction behind the proposition also deserves an honest billing: it uses persistence and
positive selection on a latent state, nothing specific to burstiness, cross-surface structure, or
behavioral data --- any positively selected persistent latent state yields the same conclusion,
and the observation that observational data alone does not identify a causal effect is old. The
assumption a study needs is therefore one it must argue for at every event it aligns on; the cost
varies by anchor (\S\ref{sec:null} finds the reproduced fraction concentrated at detectably
active moments), but non-identification does not wait for the cost to be large.

Work on the rate scale for one user. Let $\beta$ be the random-moment outcome rate, $\mu(b)$ the rate
in state $b$, $\delta=\mu(1)-\mu(0)$, and $q_{\text{real}}$, $q_{\text{pseudo}}$ the state posteriors
at real and matched pseudo moments. The two observed rate ratios are
$\mathrm{RR}_{\text{real}}=(\mu(0)+\delta q_{\text{real}}+\tau)/\beta$ and
$\mathrm{RR}_{\text{pseudo}}=(\mu(0)+\delta q_{\text{pseudo}})/\beta$, so differencing removes
$\mu(0)$ and leaves
\[
\tau/\beta=\big(\mathrm{RR}_{\text{real}}-\mathrm{RR}_{\text{pseudo}}\big)-\Delta\,m,
\qquad \Delta=\delta/\beta,\quad m=q_{\text{real}}-q_{\text{pseudo}} .
\]
Everything but the product $\Delta m$ is observed. $m$ is the state-posterior gap matching fails to
close and $\Delta$ is the normalized state effect, so under $|m|\le\bar m$ and $\Delta\le\bar\Delta$
the identified set is the observed gap $\pm\bar\Delta\bar m$. The useful summary is the breakdown
value $\bar m^{*}=(\mathrm{RR}_{\text{real}}-\mathrm{RR}_{\text{pseudo}})/\bar\Delta$: the residual
posterior gap that would have to obtain for episode selection to account for the entire apparent
effect.

On the design-averaged rate ratios the observed gap is $\IGap$, giving $\bar m^{*}=\IBreakThree$,
$\IBreakFour$, and $\IBreakSix$ at $\bar\Delta=3$, $4$, and $6$.
The frontier is the same object as an E-value \citep{vanderweele2017evalue}: the smallest unmeasured
discrepancy that would explain the association away.
We report it rather than a point because $\bar\Delta$ is an assumption, not a measurement;
readers who think the state moves the outcome harder than $6\times$ should read further left.

How large is $\bar m^{*}$ in absolute terms? The honest answer is: not obviously large, and the
verification run says so. That simulation has a true effect of exactly zero, and its own residual
posterior gap is $m=\ISimM$ at $\Delta=\ISimDelta$, where the breakdown value would be $\ISimBreak$.
So in a world constructed to contain no effect at all, matching on an observable proxy left a
discrepancy comfortably past the point that would explain the entire panel gap. A reader should take
from this that the frontier is a statement about what would have to be true, not evidence that it is
false. One structural difference cuts the other way, and a companion pilot now sizes it: the
simulation matches pseudo anchors on a single pre-activity decile, while the panel matches on
activity-count decile, slope tercile and hour-of-day quarter together. Re-running the zero-effect
pilot at that three-way richness roughly doubles the recovered fraction ($0.23$ to $0.48$;
Appendix~\ref{app:bootval}) --- richer matching does shrink $m$, and leaves more than half the gap
in place, so the frontier bounds the assumption rather than discharging it.

The identity is verified where the answer is known. In the zero-effect simulation $\tau=0$ and the
latent state is observed, so $\Delta$ and $m$ are computable and the right-hand side must vanish; it
returns $\IResid$. That check earned its keep: it caught a sign error in the first derivation
(a residual of $+4.75$ against a true zero) that re-reading the algebra had not.

Two limits are worth stating. The bound inherits the binary-state model, so $\bar\Delta$ and $\bar m$
are interpretable only under a two-state reading of episode intensity. And it bounds $\tau$ for the
matched landmark-active population, not for assistant use in general.

\section{Cross-surface matrix: confidence intervals}\label{app:matrixci}
\begin{table*}[t]
\centering
\footnotesize
\caption{Cross-surface post-event lift vs.\ within-user placebo, with $95\%$ user-clustered bootstrap
intervals.
Disclosure-safe: ratios and intervals only.}
\label{tab:matrixci}
\begin{tabular}{l cccccc}
\toprule
\textbf{Focal $\downarrow$ / Outcome $\rightarrow$} & Search & Shopping & News & Reference & Coding/docs & Leisure (ctrl) \\
\midrule
AI          & 3.0\,[2.6,\,3.5] & 5.2\,[2.1,\,9.0] & 1.6\,[0.7,\,2.3] & 2.5\,[1.8,\,3.5] & 3.7\,[2.2,\,5.4] & 1.8\,[0.9,\,2.7] \\
Shopping    & 4.0\,[3.4,\,4.7] & self & 1.9\,[1.4,\,2.4] & 1.8\,[1.0,\,3.4] & 2.3\,[0.2,\,13.0] & 2.6\,[1.4,\,4.3] \\
News        & 3.1\,[2.6,\,3.7] & 3.1\,[0.9,\,7.2] & self & 2.6\,[1.8,\,4.3] & 0.1\,[0.0,\,0.1] & 9.4\,[1.2,\,26.3] \\
Coding/docs & 6.0\,[4.0,\,8.5] & 2.4\,[0.5,\,6.3] & 0.6\,[0.0,\,2.8] & 4.3\,[0.7,\,11.4] & self & 3.3\,[1.0,\,7.9] \\
Reference   & 5.4\,[4.6,\,6.3] & 4.5\,[1.4,\,9.0] & 2.9\,[2.1,\,4.9] & self & 9.2\,[4.6,\,14.7] & 3.2\,[0.7,\,7.0] \\
\bottomrule
\end{tabular}
\end{table*}
Table~\ref{tab:matrixci} reports the one-hour post-window cross-section of
Figure~\ref{fig:heatmap}: each cell is post-event activity on that surface relative to a
stabilized within-user placebo, with $95\%$ user-clustered bootstrap intervals. (The
AI$\to$search cell is the matrix's own construction; \S\ref{sec:emp}'s naive lift caps events
per user and weights users equally on both sides of the ratio, so the two intervals differ in
the last digit.)
The search column
(the densest cross-surface outcome) is precisely estimated and the self-diagonal is masked as
mechanical.
In the leisure negative-control column only two cells clear $1$, shopping
($2.6$\,[$1.4,4.3$]) and news ($9.4$\,[$1.2,26.3$], on an interval too wide to carry weight); AI,
coding/docs, and reference all point above $1$ without excluding it. The control is therefore
\emph{moving where it is measurable}, which is the direction that matters for a falsification
reading, but we do not claim it moves for every focal domain.
Cells are marginal $95\%$ intervals over $\BonfN$ non-self comparisons, so we also report the
discount rather than leaving it to the reader. Under a two-sided Bonferroni correction across all $\BonfN$ (the test itself
is one-sided, so this is conservative), $\BonfSimultaneous$ cells still clear $1$ where
$\BonfMarginal$ do marginally --- and every one of the
$\BonfSearchTotal$ cells in the search column survives ($\BonfSearchClear$ of
$\BonfSearchTotal$), which is the column the argument rests on.

Three qualifications, briefly. The band corrects over all $\BonfN$ while the argument interprets
$\BonfSearchTotal$, so $\BonfN$ is the conservative family choice. Of the $\BonfN$, only
$\BonfTested$ were actually tested: $\BonfDegenerate$ have a bootstrap lower endpoint of exactly
zero, so no test was performed on them; they keep their denominator slots, and the honest rate is
$\BonfSimultaneous$ of $\BonfTested$ tested. And the widening recovers a log-scale standard error
from stored percentile endpoints, extrapolating a lognormal shape into a tail that is plausibly
heavier for sparse-numerator ratio bootstraps --- if so, the widened interval is too narrow. The
fix is a max-t band (one shared user resample per replicate, all cells recomputed), unavailable
here only because the matrix bootstrap resamples users independently per cell and did not retain
joint draws. The argument in
\S\ref{sec:emp} rests on the search column, where every domain clears $1$ by a wide margin,
and not on individual sparse cells. \mbox{News\,$\rightarrow$\,coding/docs} lies entirely below $1$
on a cell whose lower endpoint is pinned at zero: we treat it as unestimable rather than as either
a suppression or a null.

\paragraph{How surfaces are assigned.}
Every surface is an \emph{exact host-string membership test} against a hand-curated
allowlist --- not a registered-domain match, not a regex, and not a learned classifier. The lists are
small and deliberately narrow: leisure is fourteen hosts (streaming, social, gaming), and the shopping,
news, reference and coding/docs lists are of comparable size, with variant hosts enumerated explicitly
where they differ (\texttt{bbc.com} and \texttt{bbc.co.uk}; \texttt{wikipedia.org} and
\texttt{en.wikipedia.org}). The consequence is high precision and low recall: a surface is
\emph{under}-counted rather than contaminated, so a cell's activity is a floor on true activity for
that category. That direction is the one a negative control wants --- a narrow leisure list
cannot import outcome-related traffic --- and it is the wrong direction for reading any cell's level as
a population rate, which we do not do. The lists leak nothing about the panel and are included in the
companion release.

\section{Weighting Sensitivity}\label{app:weighting}
The naive AI$\to$search lift depends on the estimand's weighting.
Table~\ref{tab:weighting} reports the naive lift on a
stabilized placebo (a fixed number of random placebo anchors per user, removing single-draw placebo
noise from the denominator) under four schemes.
The lift is large and clear of $1$ under all of
them.
The one-random-event scheme's asymmetric interval is the percentile bootstrap under a far smaller
effective event count, not a transcription error; it reproduces the analysis manifest to the
digit.

\paragraph{Ratio of means versus mean of ratios.}
The window-level mean of ratios $E[Y/(Y{+}O)]$ in \S\ref{sec:emp} reads $\ShareMOR\times$
against the ratio of means' $\ShareROM\times$. The gap is structural: the mean of ratios is
undefined on a window with no activity, so
computing it requires dropping those windows --- and they are far more common on the placebo side
(\ZeroWinPlacebo\ of placebo windows against \ZeroWinAI\ of post-event windows). Dropping them
conditions the placebo arm onto its active minority and pulls the ratio toward $1$; worse,
emptiness is a function of activity the event can move, so the drop conditions on a descendant of
treatment and forfeits Proposition~\ref{prop:filter}(a). The ratio of means needs no such
convention, since a user's own total is the denominator.
No single user contributes more
than one percent of events, and the effective sample size under event weighting is $0.50$ of the user
count.

\par\vskip 6pt
\noindent\begin{minipage}{\columnwidth}
\captionof{table}{Naive AI$\to$search lift (versus stabilized within-user placebo) by weighting scheme;
user-clustered bootstrap 95\% intervals.}
\label{tab:weighting}
\small
\begin{tabular}{l c}
\toprule
\textbf{Weighting} & \textbf{Lift} \\
\midrule
Event-weighted (events pooled, capped at 50/user) & $2.97\times$\,[$2.65,3.36$] \\
User-weighted (per-user mean first) & $2.95\times$\,[$2.57,3.44$] \\
One random event per user & $2.52\times$\,[$2.37,3.67$] \\
Capped at 10 events per user & $3.10\times$\,[$2.64,3.63$] \\
\bottomrule
\end{tabular}
\end{minipage}

\section{Known-Null Construction, Sensitivity, and Inference}\label{app:bootval}
\paragraph{The secondary pooled construction.}
The pooled construction of \S\ref{sec:null} reweights pseudo moments toward the real moments'
joint pre-$t$ distribution of activity-count decile, activity-slope tercile, and hour-of-day
quarter (6-hour blocks). The weights are built pooled and applied within user, so they equalize
the cell mix \emph{inside} each user and leave the panel-level marginal only partly matched ---
one reason it is a secondary read; \citet{chabeferret2015} shows formally that matching on a
transient pre-treatment outcome can be worse than differencing when selection loads on a
transitory shock.
Its outcome is the search rate per hour over the following hour, censored at the user's next AI
response in both arms, against each user's stabilized random-moment placebo rate, user-weighted.
The censoring rule is identical in both arms, but real events are more often followed quickly by
another AI turn, so their realized windows are shorter and weight the immediate post-event peak;
because that censoring is informative, its net direction is not guaranteed, and we treat the
pooled comparison as descriptive. Real moments can also sit mid-conversation while pseudo
moments exclude prior-hour AI by construction, a treatment-history difference the fixed-window
and within-user variants bound but do not remove; the equal-user matched set removes both
features and is the headline for that reason.

\paragraph{The estimand, exactly.}
The headline matched set does not cap events per user, because users enter its two means equally.
What that buys, precisely: the estimand is a \emph{ratio} of user-equal-weighted means,
$E_i[p_i-\beta_i]/E_i[r_i-\beta_i]$, which is a baseline-and-excess-weighted average of per-user
fractions rather than an equal-weighted one. Users with a larger real excess therefore carry more
of the ratio, which is why the trimming sensitivity below is arithmetic rather than anomalous.
Matching covers \BCommonSupportPct\ of real anchors with estimand-weighted standardized
covariate differences below $0.05$; its effective pseudo support is \BEffectiveSupportPct,
the largest normalized weight is \BMaxWeightPct, and the single-user
influence figure quoted in \S\ref{sec:null} is a relative change in the estimate, not a share of
estimating weight, amplified by the small real-arm baseline in the denominator.

\paragraph{Effect by past-only episode age.}
Real events' apparent lift declines with the episode's elapsed age, measured from the past
alone: $5.09\times$ in the first ten minutes of an activity run to $3.18\times$ after half an
hour, while matched null moments show no comparable gradient ($2.96\times$, $2.52\times$,
$2.91\times$ across the same three age bins).

\paragraph{Design-draw variance.}
The interval reported in \S\ref{sec:null} combines the within-design bootstrap variance with the
between-design variance as $V=\bar W+B/M$: the pseudo-pool draw is Monte Carlo noise the analyst
introduces, the estimand is the design-\emph{averaged} fraction, and its residual design error
shrinks with $M$. Rubin's $\bar W+(1+1/M)B$ is the variance of a \emph{single} draw and is
reported alongside in the artifact as the conservative alternative. Design noise is \BDesignShare\
of the reported variance (\BDesignFMISingle\ under the single-draw convention). An earlier version
of this analysis reported the \emph{envelope} of four seed-specific intervals instead. That was a
mistake worth naming, because it errs toward false modesty in two ways: a union of $M$ intervals
has no coverage interpretation and widens with $M$, and four draws estimate the between-design
variance so poorly that it came out $2.7\times$ too large. The corrected interval is both narrower
and, unlike the envelope, excludes $1$.

\paragraph{Why the bootstrap is valid here.}
\citet{abadieimbens2008} showed the bootstrap is invalid for matching estimators, so the headline
interval owes an argument. Their obstruction needs nearest-neighbour matching on continuous
covariates, a fixed number of matches, and resampling that re-forms matched sets: the estimator is
then not a differentiable functional of the empirical distribution and the bootstrap's first-order
expansion does not exist. This design has none of the three. Matching is coarsened, on a grid whose
\emph{rule} is fixed in advance --- activity-count decile, slope tercile, hour-of-day quarter ---
rather than on nearest neighbours in a continuous covariate; every pseudo moment in a cell
contributes rather than a fixed number of nearest ones, so the within-cell object is a mean; and the
bootstrap resamples whole users, each carrying its real and pseudo moments together.

The grid's \emph{boundaries} are sample quantiles, and \S\ref{sec:null} rebuilds them in every
replicate, so cell membership is not literally invariant to perturbation. That is a smooth
dependence rather than the jump Abadie--Imbens exploit: quantile boundaries are
$\sqrt{n}$-consistent, so they contribute an additional influence term of order $n^{-1/2}$ that is
first-order negligible, and because the replicate rebuilds the edges the bootstrap propagates that
variation rather than conditioning it away. What their obstruction needs is a match \emph{identity}
that flips discontinuously and is never averaged over; here a moment near a boundary moves between
adjacent cells whose means differ by $O(n^{-1/2})$. The estimand stays an asymptotically linear
function of user-level means, so the delta method applies and the cluster bootstrap is first-order
valid --- for this estimator, not in general.

That argument is testable, and would have failed visibly. Three constructions on the same rows give
within-design standard errors of $\BVBootSE$ (percentile bootstrap over users), $\BVDeltaSE$
(delta method), and $\BVJackSE$ (jackknife) --- a maximum-to-minimum ratio of $\BVSERatio$. These are
the within-design component, not the headline's total after the design combination of
\S\ref{sec:null}; the bootstrap leg is the construction that component uses.
The agreement is worth only what it can rule out. All three target the same asymptotic variance, and
two of them --- the delta method, which presumes the influence function whose existence is at issue,
and the jackknife, itself inconsistent for non-smooth functionals --- share the failure mode being
tested, so agreement cannot certify smoothness. What it does exclude is a bootstrap variance biased
away from the analytic ones, which is how the Abadie--Imbens failure presents.

Certification needs coverage under a known truth, so we built one. On synthetic trajectories where
event timing is selected on a latent burst and the event has \emph{no} causal effect, the estimand has
a computable probability limit, which a large pilot puts at $\CovLimit$. Across $\CovReps$
replications of $\CovUsers$ synthetic users with $\CovBoot$ bootstrap draws each --- a modest
configuration, because the design-rebuilding arm costs one full rebuild per draw --- the
user-clustered bootstrap on a fixed design covers that limit
$\CovFixed$ of the time (Monte Carlo SE $\CovFixedSE$) against a nominal $0.95$ --- so the cluster
bootstrap is calibrated for this estimator, which is the Abadie--Imbens question, and this test could
have failed. Rebuilding the matching and the baseline inside every bootstrap draw instead pushes
coverage to $\CovDesign$ ($\pm\CovDesignSE$) on an interval $\CovDesignWidth$ wide against
$\CovFixedWidth$: adding design noise the analyst introduces makes the interval conservative rather
than more honest, which is the same reasoning that sets the reported variance at
$\bar W+B/M$ rather than $\bar W+(1{+}1/M)B$.

One number from that exercise belongs to Proposition~\ref{prop:filter}(b) rather than to inference. At
a true effect of exactly zero the fraction does not approach $1$: the replication mean falls short
of it by $\CovGapToOne$ (the probability limit itself is $\CovLimit$; the small difference between
$1-\CovLimit$ and $\CovGapToOne$ is finite-sample bias in the mean estimate),
because the pseudo arm is matched on an observable proxy and a noisy proxy does not deliver
conditional mean exchangeability. That is the proposition as a measurement, and it is also why the
reproduced fraction is a floor on what selection can account for rather than a decomposition of it.
On the panel itself, the same estimator run at clock-only, clock-plus-count,
clock-plus-count-plus-slope, and a finer three-way grid moves the fraction from $0.58$ to
$0.74$: matching richness alone is worth about $0.16$ within the fixed eligible population.
Richness helps but does not discharge the floor property: re-running the synthetic pilot with the panel's three-way
grid (count decile, slope tercile, and a coarse time block standing in for the clock axis) raises
the recovered fraction from $0.23$ at a single count decile to $0.34$ with the slope and $0.48$
with the time block --- roughly double, and still under half of the truth. The reproduced
fraction therefore remains a floor with substantial downward bias even at panel-like matching
richness (\texttt{b1\_plim\_richness} in the companion release).

Smoothness does not by itself rule out an estimate resting on a few users, so we also drop the most
influential ones. Removing the $\BVTrimK$ users with the largest leave-one-out influence moves the
fraction from $\BDesignReproduced$ to $\BVTrimEst$. Selecting the hardest-pulling users and then
removing them is a biased operation, so that drift alone proves nothing, and it needs a null that can
fail. A random-drop null is not one: on tail-free data the selected trim lands outside a random-drop
band roughly half the time, because selecting and dropping at random are different operations. We
therefore apply the identical selection to draws matched to the observed data in $n$ and in their
first two moments but with no heavy tail. The observed shift is $\BVTrimShift$ against a null $95$th
percentile of $\BVTrimNull$. The influential tail is real, which cuts against the coverage result above rather than
adding to it: a heavier-than-Gaussian influence distribution is the condition under which a percentile
bootstrap can undercover, and the coverage study's synthetic users have no such tail. We report both
and let them sit in tension rather than presenting the pair as joint reassurance. The tail pulls the
estimate \emph{down}, and the move is not small: $\BVTrimEst$ lands outside the whole design-conditional
range of \S\ref{sec:null} ($\BDesignSpreadLo$--$\BDesignSpreadHi$), though the shift itself is
narrower than that range is wide. We state that plainly rather than absorb it into the
interval, because a confidence interval covers sampling variation in one estimator on one
population, and trimming changes the population --- landing inside the interval licenses nothing.
So: the point estimate is not robust to dropping the $\BVTrimK$ most influential users, while the conclusion drawn
from it is, since every trimmed value stays far from zero and short of one. Nor is the direction
comfortable in both directions at once. It is conservative for the claim that event windows are
unsafe, and it moves \emph{toward} the reading this paper is careful to exclude, that nothing is left
once selection is removed; \S\ref{sec:null}'s design draws are what rule that out, not this rung.

\section{A Public Benchmark for Episode-Selection Estimators}\label{app:bench}
\paragraph{MovieLens.}
On the public MovieLens log (grouplens.org), inter-event
times are heavy-tailed with nearly the same burstiness as the panel ($B=0.71$ versus $0.72$;
the two are computed under different tail truncations --- 30 days for the sparser ratings
stream, one day for the panel --- so read the agreement as qualitative), and a
focal rating sits in a sharp burst of further ratings: $2.1\times$ the within-user placebo in the
hour after and $1.8\times$ before.
We do not claim a rating has no true effect on later rating
behavior, the interface itself sequences items, so we \emph{inject} one: a known $\tau=0.5$ extra
ratings added to a disjoint, strictly later outcome window, on top of the real burst structure.
The windows are separated by role: focal events are selected on realized local intensity in a
$\pm30$-minute window; the injected effect lands in a strictly later outcome window; the
adjustment variable is measured on windows disjoint from the outcome.
A naive event window returns
$+12.5$ ($25\times$ the injected effect); adjusting for a strictly-earlier past window barely helps
($+11.6$); conditioning on a local-intensity smoother that straddles the event recovers the truth
($+0.57$).
The recovering adjuster is a \emph{both-sides} object (it uses data from before \emph{and} after
the event), admissible only because the injected effect cannot move the rating stream it counts
--- the first thing to fail on a real treatment (Proposition~\ref{prop:filter}).
Temporal
admissibility and state recovery trade off, and the analyst must say which they are buying.

\paragraph{Wikipedia.}
To exercise the parametric estimator end-to-end on a dense series with a genuinely latent,
continuous state we use Wikipedia daily article views over two years ($2024$--$2025$, the $25$
attention-volatile articles of a $30$-article candidate list that have the full window observed
and a recoverable quiet/newsworthy contrast), selecting focal events on a
day's attention level and injecting a known effect on a later
window.
A genuinely past-only Poisson-HMM forecast (from each focal day's filtered attention state,
propagated through the transition model on the article's own series) cuts a naive event-window
estimate of $5.4\times$ the true effect to $4.6\times$, $3.7\times$, and $2.9\times$
at two, four, and eight states: a partial correction that removes $58\%$ of the excess at eight
states.
The residual is what \S\ref{sec:sim} predicts when a discrete model
approximates a continuous state, and it is the envelope to expect from the latent-state approach on
real aggregates.

\paragraph{The benchmark.}
Table~\ref{tab:bench} collects the controlled tests into a public, reproducible benchmark; the
rows are plasmode tests of estimator behavior on real bursty series, not replications of the
panel setting.
The \textbf{Feasible} column is the
\emph{genuinely past-only} latent-state estimator (a filtered-posterior forecast propagated through
the transition model, using no post-event data), except on MovieLens, where the intensity is
directly countable and the feasible form is an observable both-sides local-intensity smoother (see
text for why that is admissible there).
For comparison, the crude observed-history summaries
(matching, pre-activity stratification, recent-activity intensity) remove ${\sim}0\%$, and a
two-sided smoother---admissible only when the event cannot move the state-recovery stream---removes
$92\%$ (synthetic) and $86\%$ (AR(1)); past-only forecasting of a fast latent burst is inherently
limited.
True/Naive/Feasible are additive count effects except the Wikipedia row, which is expressed as a
\emph{ratio to the injected effect} (marked $\times$): an effect \emph{was} injected there, and
$1.0\times$ in the True column means an estimator that recovered it exactly, not a null. A reader
who takes that column as zero-effect will misread the row, so the units are stated here rather than
left to the caption.
The MovieLens row deserves its own qualifier. Its $99\%$ is the most flattering
number in the table and the least transportable: the injected effect lands in a window disjoint from
the one the adjuster counts, so the confound is separable from the effect \emph{by construction}, and
a real treatment feeds back into the stream being smoothed. Read it as an upper bound on what
both-sides adjustment can do when separability holds, not as a general recovery rate.
All rows regenerate from the code in the companion release.

\par\vskip 6pt
\noindent\begin{minipage}{\columnwidth}
\captionof{table}{A public, reproducible benchmark for episode-selection estimators. Each row injects a
known effect (zero for the null cases) onto a bursty process.
``Excess rem.'' is $(\text{naive}-\text{feasible})/(\text{naive}-\text{true})$ on unrounded
estimates.}
\label{tab:bench}
\small
\setlength{\tabcolsep}{4pt}%
\begin{tabular}{p{2.9cm} c c c c}
\toprule
\textbf{Benchmark} & \textbf{True} & \textbf{Naive} & \textbf{Feasible} & \textbf{Excess rem.} \\
\midrule
Synthetic zero-effect (HMM burst) & $0$ & $+3.30$ & $+1.93$ & $42\%$ \\
AR(1)-misspecified burst & $0$ & $+6.7$ & $+3.6$ & $46\%$ \\
MovieLens injected plasmode & $+0.50$ & $+12.5$ & $+0.57$ & $99\%$ \\
Wikipedia injected plasmode & $1.0\times$ & $5.4\times$ & $2.9\times$ & $58\%$ \\
\bottomrule
\end{tabular}
\end{minipage}\par\vskip 6pt

\section{Extended Related Work}\label{app:extendedrw}
\paragraph{Endogenous treatment timing.}
Trainees enter job programs after a pre-treatment earnings dip, so the dip itself predicts
enrollment (Ashenfelter's dip~\cite{ashenfelter1978,heckman1999preprogramme}); patients start a
drug because early symptoms of the outcome have already begun (protopathic bias and confounding
by indication~\cite{horwitz1980protopathic}); and misaligned time zero is a recognized source of
severe bias in observational epidemiology, motivating target-trial emulation, which requires
aligning eligibility, assignment, and follow-up at a defensible time zero~\cite{hernan2016target}.
Self-controlled designs, case-crossover most prominently, compare a person with themselves across
time and still require strong exchangeability of the compared
moments~\cite{maclure1991case,shahn2023case}.
Econometrics has answers our non-identification result does not overturn, because each adds
exactly the assumptions it says are required: timing-of-events identifies effects in duration
models under mixed proportional hazards and no-anticipation (behavior before the event not
already responding to it)~\cite{abbring2003timing}; regression discontinuity in time exploits an
externally set break in a high-frequency series~\cite{hausman2018rdit}; and where parallel trends
fails by a bounded amount, sensitivity analysis bounds the effect over a restricted class of
violations~\cite{rambachan2023credible}.

\paragraph{Time-varying confounding and g-methods.}
When a time-varying state predicts both treatment and outcome and is itself affected by past
treatment, standard adjustment is biased; g-methods and marginal structural models address it by
modeling the treatment process~\cite{robins2000marginal,naimi2017gmethods}, with extensions to
continuous time~\cite{roysland2011martingale,rytgaard2022continuous}, irregular event
streams~\cite{schulam2017reliable}, and, most directly for recurring user-timed events, marked
point-process treatment regimes~\cite{ryalen2026marked}.

\paragraph{Bursty human dynamics.}
Priority-queue models explain heavy-tailed activity timing through task execution
order~\cite{barabasi2005origin}, but heavy tails also arise from cycles plus within-session
cascades~\cite{malmgren2008poissonian} and other mechanisms~\cite{karsai2018bursty};
self-exciting point processes, in which recent events raise the short-run rate of future ones,
supply the general machinery for clustered timing~\cite{hawkes1971spectra,daley2003introduction}.

\paragraph{Event studies and compositional outcomes.}
Modern difference-in-differences corrects bias from heterogeneous and staggered treatment
timing~\cite{goodmanbacon2021,callaway2021,sun2021eventstudy}, surveyed in
\citet{roth2023trending}, and pre-trend tests are underpowered enough that conditioning on
passing them distorts estimates~\cite{roth2022pretest}.
For behavioral logs the point is sharper: a pre-event rise contradicts the simplest
interpretation, an isolated treatment that begins the change at time zero; endogenous timing is
the leading explanation here, though anticipation and earlier treatments remain alternatives a
design must address, and no reweighting of post-event comparisons repairs it.
Where the body recommends activity \emph{shares} as an alternative estimand, the constraints of
compositional outcomes apply~\cite{aitchison1986compositional}: a share is bounded by its
denominator and answers a mix question instead of a volume question.

\section{Ethics Detail: Review Status, Consent, Withdrawal, and Privacy}\label{app:ethicsdetail}
The study was reviewed internally against the panel provider's data-use terms, which is a
contractual review and not an ethics review, and we do not present it as one; readers who require
IRB oversight as a condition of credence should discount accordingly.
Consent was obtained by the panel provider for behavioral measurement and for analysis by the
provider's clients; the authors are such a client, so this is third-party secondary use under
terms the participants accepted rather than consent given directly to us for this study.
Panelists may withdraw through the provider, which ends collection prospectively; we hold no
channel to individual participants, a real limitation of secondary analysis at this remove.
\emph{De-identified} should be read narrowly: pseudonymous clickstreams are quasi-identifying ---
a sufficiently distinctive browsing pattern can be linked to a person given auxiliary data, as
the re-identification literature has shown repeatedly for supposedly anonymous behavioral
traces~\cite{su2017deanonymizing} --- so the protection is not the pseudonymisation but what
never leaves the environment: no page content, no conversation text, and no individual-level
output.
Linking AI use to browsing and search for the same user is more sensitive than either stream
alone; our use of that linkage is confined to aggregate measurement of a methodological failure
mode, whose implication favors designs that need less of such data, not more.

\section{An Attempted Prevalence Audit, and Why It Does Not Settle the Question}\label{app:litaudit}
A reporting standard presumes the design is used, so we tried to check rather than assert. We screened
\LitScreened\ arXiv preprints from 2024 onward across six computer-science, economics and statistics
categories with seven keyword queries, and coded each title and abstract --- with an LLM
(\texttt{gpt-5-mini}, prompts and outputs included in the companion release) --- for whether it estimates a behavioral
effect from observational logs and what defines time zero. Of \LitInFrame\ in-frame papers,
\LitExtTimedN\ are externally timed and \LitUserTimedN\ user-timed.

\textbf{We do not report that as prevalence, because the screen fails a test it should pass.} Two
papers we had already coded by hand, one of each type, were not retrieved by any of the seven
queries --- including the user-timed one. A frame that misses a known instance of the category it
is counting cannot measure that category's share; fetched by identifier, the coder labelled both
correctly, so the failure is retrieval, not coding. The yield is also query-concentrated,
abstract-level coding leaves \LitUnclear\ unclear, and, cutting the other way, arXiv favours
computer science over the journal fields where natural-experiment designs are standard; the
directions do not net cleanly, which is the point. What survives is narrow: among the papers this
frame surfaced, externally assigned timing is the common choice --- consistent with the
endogeneity being understood where authors have the option, and no evidence about settings where
they do not. We ran this expecting support for a claim of widespread use; it did not deliver
that, and it does not support the opposite either, so the prevalence language was removed rather
than reversed and the argument of \S\ref{sec:disc} is conditional on the design being used.

\paragraph{The industry side, which also went against us.}
Because the natural reply is that the design lives in industry rather than in journals, we assembled a
purposive sample of $\IndN$ industry publications --- vendor blogs, agency case studies, analytics
reports --- making quantitative claims about how AI use changes behavior, and coded them the same way.
The modal design is not ours: $\IndSourceCmp$ compare users \emph{by traffic source} with no event
alignment, $\IndNoDesign$ make a quantitative claim with no identifiable comparison, $\IndExtTimed$ are
externally timed, only $\IndUserTimed$ anchor on a user-timed event, and the remaining two could
not be classified from the page. A source comparison carries a
different confound --- selection into the source --- worth naming as the other dominant failure
mode in applied AI measurement.
The sample is purposive and search-shaped, so it measures what our queries surfaced rather than an
industry; and of the $\IndUserTimed$ user-timed pieces, $\IndUTWithDiag$ report at least one of a
pre-period trend, a placebo window, or a matched comparison group. So the claim that the
diagnostics are absent where the design is used is not one we can make either, and we do not make
it. What survives both audits is the conditional argument and the instrument, not a target.

\end{document}